\title{Approximation Algorithms for Connected Maximum Cut and Related Problems}
\author{MohammadTaghi Hajiaghayi\inst{1}\thanks{Partially supported by NSF CAREER Award 1053605, NSF grant CCF-1161626, DARPA/AFOSR grant FA9550-12-1-0423, and a Google Faculty Research award.}
	\and Guy Kortsarz\inst{2}\thanks{Partially supported by NSF grant 1218620.}
        \and Robert MacDavid\inst{2}
        \and Manish Purohit\inst{1}\thanks{Partially supported by NSF grants CCF-1217890 and IIS-1451430.}
	\and Kanthi Sarpatwar\inst{3}\thanks{Partially supported by NSF grant CCF-1217890. Work done when the author was a student
    at the University of Maryland, College Park.}}
      \institute{University of Maryland, College Park, MD 20742, USA \\ {\tt
        	\{hajiagha, manishp\}@cs.umd.edu}  \and 
        	Rutgers University - Camden, Camden, NJ 08102, USA \\ {\tt
        guyk@camden.rutgers.edu, robertmacdavid@gmail.com}\and IBM T. J. Watson
  Research Center, Yorktown Heights, NY 10598\\ {\tt sarpatwa@us.ibm.com}}
\newcommand{\slfrac}[2]{{#1/#2}}
\newcommand{\mone}{{\eta}}
\newcommand{\alphatt}{$\alpha$-\textsf{thick}}
\newcommand{\mc}{\textsf{Max-Cut}}
\newcommand{\cmcfull}{\textsf{Connected Maximum Cut}}
\newcommand{\cmc}{\textsf{CMC}}
\newcommand{\bcmc}{\textsf{b-CMC}}
\newcommand{\wcmc}{\textsf{WCMC}}
\newcommand{\pmsat}{\textsf{PM-3SAT}}
\newcommand{\opt}{S}
\DeclareMathOperator*{\argmin}{arg\,min}
\DeclareMathOperator*{\argmax}{arg\,max}
\begin{document}

\maketitle

\begin{abstract}
An instance of the \cmcfull\ problem consists of an undirected graph $G=(V,E)$ and the goal is to find a subset of vertices $S \subseteq V$ that maximizes the number of edges in the cut $\delta(S)$ such that the induced graph $G[S]$ is connected. 
We present the first non-trivial $\Omega(\frac{1}{\log n})$ approximation algorithm for the \cmcfull\ problem in general graphs using novel techniques. We then extend our algorithm to edge weighted case and obtain a poly-logarithmic approximation algorithm.
Interestingly, in contrast to the classical Max-Cut problem that can be solved in polynomial time on planar graphs, we show that the \cmcfull\ problem remains NP-hard on unweighted, planar graphs. On the positive side, we obtain a polynomial time approximation scheme for the \cmcfull\ problem on planar graphs and more generally on bounded genus graphs.
\end{abstract}

\section{Introduction}
\label{sec:introduction}

Submodular optimization problems have, in recent years, received a considerable amount of attention
\cite{badanidiyuru2014fast, buchbinder2012tight, buchbinder2014submodular, calinescu2007maximizing,chekuri2011submodular, feige2011maximizing, nemhauser1978analysis}
in algorithmic research. 
In a general \textsf{Submodular Maximization} problem, we are given a non-negative submodular\footnote{A function $f$ is called submodular if $f(S) + f(T) \geq f(S \cup T) + f(S \cap T)$ for all $S, T \subseteq U$.} function over the power set of a universe $U$ of elements, $f:2^U\rightarrow \mathbb{R}^+\cup \{0\}$ and the goal is to find a subset $S\subseteq U$ that maximizes $f(S)$ so that $S$ satisfies certain pre-specified constraints. 
In addition to their practical relevance, the study of submodular maximization problems has led to the development of several important theoretical techniques such as the continuous greedy method and multi-linear extensions~\cite{calinescu2007maximizing} and the double greedy~\cite{buchbinder2012tight} algorithm, among others.

In this study, we are interested in the problem of maximizing a submodular set function over vertices of a graph, such that the selected vertices induce a connected subgraph.
Motivated by applications in coverage over wireless networks,
Kuo et al.~\cite{kuo2013maximizing} consider the problem of maximizing a monotone, submodular function $f$ subject to connectivity and cardinality constraints of the form $|S|\leq k$ and provide an $\Omega(\frac{1}{\sqrt{k}})$ approximation algorithm. 
For a restricted class of monotone, submodular functions that includes the covering function\footnote{In this context, a covering function is defined as  $f(S) = \sum_{v \in \mathbb{N}^+(S)} weight(v)$ where $\mathbb{N}^+(S)$ is the closed neighborhood of the set of vertices $S$}, Khuller et al.~\cite{khuller2014analyzing} give a constant factor approximation to the problem of maximizing $f$ subject to connectivity and cardinality constraints.

In the light of these results, it is rather surprising that no non-trivial approximation algorithms are known for the case of general (non-monotone) submodular functions. 
Formally, we are interested in the following problem, which we refer to as 
\textsf{Connected Submodular Maximization (CSM)}:  Given a simple, undirected graph $G=(V,E)$ and a non-negative submodular set function $f:2^V\rightarrow \mathbb{R}^+\cup\{0\}$, find a subset of vertices $S\subseteq V$ that maximizes $f(S)$ such that $G[S]$ is connected. 
We take the first but important step in this direction and study the problem in the case of one of the most important non-monotone submodular functions, namely the \textsf{Cut} function.
Formally, given an undirected graph $G=(V,E)$, the goal is to find a subset $S \subseteq V$, such that $G[S]$ is connected and the number of edges that have exactly one end point in $S$, referred to as the cut function $\delta(S)$, is maximized. We refer to this as the \cmcfull\ problem.
Further, we also consider an edge weighted variant of this problem, called the \textsf{Weighted }\cmcfull\ problem, where function to be maximized is the total weight of edges in the cut $\delta(S)$. 

We now outline an application to the image segmentation problem that seeks to identify ``objects'' in an image. Graph based approaches for image segmentation \cite{felzenszwalb2004efficient, petrov2005image} represent each pixel as a vertex and weighted edges represent the dissimilarity (or similarity depending on the application) between adjacent pixels. Given such a graph, a connected set of pixels with a large weighted cut naturally corresponds to an object in the image. Vicente et al.~\cite{vicente2008graph} show that even for interactive image segmentation, techniques that require connectivity also perform significantly better that cut based methods alone.

\subsection*{Related Work}
\mc\ is a fundamental problem in combinatorial optimization that finds applications in diverse areas. 
 A simple randomized algorithm that adds each vertex to $S$ independently with probability $\slfrac{1}{2}$ gives a $0.5$-approximate solution in expectation.
In a breakthrough result, Goemans and Williamson \cite{goemans1995improved} gave a $0.878$-approximation algorithm using semidefinite programming and randomized rounding. 
Further, Khot et al.~\cite{khot2007optimal} showed that this factor is optimal assuming the Unique Games Conjecture. Interestingly, the \mc\ problem can be optimally solved in polynomial time in planar graphs  by a curious connection to the matching problem in the dual graph~\cite{hadlock1975finding}.
To the best of our knowledge, the \cmcfull\ problem has not been considered before our work. 
Haglin and Venkatesan~\cite{haglin1991approximation} showed that a related problem, where we require both sides of the cut, namely $S$ and $V \setminus S$, to be connected, is NP-hard in planar graphs.

We note that the well studied \textsf{Maximum Leaf Spanning Tree} (MLST) problem (e.g. see \cite{solis19982}) is a special case of the \textsf{Connected Submodular Maximization} problem. 
We also note that recent work on graph connectivity under vertex sampling leads to a simple constant approximation to the \textsf{Connected Submodular Maximization} for highly connected graphs, i.e., for graphs with $\Omega(\log n)$ vertex connectivity. Proofs of these claims are presented in the Appendix~\ref{app:mlst} and \ref{app:conn} respectively. 
 
We conclude this section by noting that connected variants of many classical combinatorial problems have been extensively studied in the literature and have been found to be useful. 
The best example for this is the \textsf{Connected Dominating Set} problem. Following the seminal work of Guha and Khuller~\cite{guha1998approximation}, the problem has found extensive applications (with more than a thousand citations) in the domain of wireless ad hoc networks as a \emph{virtual backbone} (e.g. see~\cite{das1997routing, du2013connected}). Few other examples of connected variants of classic optimization problems include \textsf{Group Steiner Tree}~\cite{garg1998polylogarithmic} (which can be seen as a generalization of a connected variant of \textsf{Set Cover}), \textsf{Connected Domatic Partition}~\cite{censor2015tight, censor2014new}, \textsf{Connected Facility Location}~\cite{eisenbrand2008approximating,swamy2004primal}, and \textsf{Connected Vertex Cover}~\cite{cygan2012deterministic}.

\subsection*{Contribution and Techniques}
\label{sec:our-contribution}
Our key results can be summarized as follows.

1. We obtain the first $\Omega(\frac{1}{\log n})$ approximation algorithm for the \cmcfull\ (\cmc) problem in general graphs. Often, for basic connectivity problems on graphs, one can obtain simple $O(\log n)$ approximation algorithms using a probabilistic embedding into trees with $O(\log n)$ stretch~\cite{fakcharoenphol2003tight}. Similarly, using the cut-based decompositions given by R\"{a}cke~\cite{racke2008optimal}, one can obtain $O(\log n)$ approximation algorithms for cut problems (e.g. Minimum Bisection). Interestingly, since the \cmc\ problem has the flavors of both cut and connectivity problems simultaneously, neither of these approaches are applicable. Our novel approach is to look for \textsf{$\alpha$-thick trees}, which are basically sub-trees with ``high'' degree sum on the leaves. 

2. For the \textsf{Weighted }\cmcfull\ problem, we obtain an $\Omega(\frac{1}{\log^2n})$ approximation algorithm. The basic idea is to group the edges into logarithmic number of weight classes and show that the problem on each weight class boils down to the special case where the weight of every edge is either $0$ or $1$.

3. We obtain a polynomial time approximation scheme for the 
\cmc\ problem in planar graphs and more generally in bounded genus graphs. This requires the application of a stronger form of the edge contraction theorem by Demaine, Hajiaghayi and Kawarabayashi~\cite{demaine2011contraction} that may be of independent interest.

4. We show that the \cmc\ problem remains NP-hard even on unweighted, planar graphs.  This is in stark contrast with the regular \mc\ problem that can be solved optimally in planar graphs in polynomial time. We obtain a polynomial time reduction from a special case of \textsf{3-SAT} called the \textsf{Planar Monotone 3-SAT} (\pmsat), to the \cmc\ problem in planar graphs. This entails a delicate construction, exploiting  the so called  ``rectilinear representation'' of a \pmsat\ instance, to maintain planarity of the resulting \cmc\ instance. 

\section{Approximation Algorithms for General Graphs}
\label{sec:appr-algor-gener}

In this section, we consider the \cmcfull\ problem in general graphs. In fact, we provide an $\Omega(\frac{1}{\log n})$ approximation algorithm for the more general problem in which edges can have weight 0 or 1 and the objective is to maximize the number of edges of weight 1 in the cut. This generalization will be useful later in obtaining a poly-logarithmic approximation algorithm for arbitrary weighted graphs.

We denote the cut of a subset of vertices $S$ in a graph $G$, i.e., the set of edges in $G$ that are incident on exactly one vertex of $S$ by $\delta_G(S)$ or when $G$ is clear from context, just $\delta(S)$. Further, for two disjoint subsets of vertices $S_1$ and $S_2$ in $G$, we denote the set of edges that have one end point in each of $S_1$ and $S_2$, by $\delta_G(S_1, S_2)$ or simply $\delta(S_1,S_2)$. 
The formal problem definition follows - 

\vspace{2mm}
\noindent
{\bf Problem Definition. }\textsf{\{0,1\}-Connected Maximum Cut} (\bcmc): Given a graph $G = (V,E)$ and a weight function $w:E \rightarrow \{0,1\}$, find a set $S \subset V$ that maximizes $\sum_{e \in \delta(S)} w(e)$ such that $G[S]$ induces a connected subgraph.
\vspace{2mm}

We call an edge of weight $0$, a \textsf{0-edge} and that of weight $1$, a \textsf{1-edge}. Further, let $w(\delta(S)) = \sum_{e \in \delta(S)} w(e)$ denote the weight of the cut, i.e., the number of \textsf{1-edges} in the cut. 
We first start with a simple reduction rule that ensures that every vertex $v \in V$ has at least one \textsf{1-edge} incident on it.

\begin{restatable}{clm}{onlygood}
Given a graph $G = (V,E)$, we can construct a graph $G' = (V',E')$ in polynomial time, such that every $v' \in V'$ has at least one \textsf{1-edge} incident on it and $G'$ has a \bcmc\ solution $S'$ of weight 
 at least $\psi$ if and only if $G$ has a \bcmc\ solution $S$ of weight at least $\psi$.
\label{lem:only-good}
\end{restatable}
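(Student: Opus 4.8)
The plan is to deal with exactly the vertices that violate the desired property, namely the vertices carrying no \textsf{1-edge}; call such a vertex \emph{bad}. Since every edge incident on a bad vertex is a \textsf{0-edge}, a bad vertex never contributes to $w(\delta(\cdot))$ and is useful only for keeping $G[S]$ connected. I would therefore eliminate bad vertices one at a time, re-routing the connectivity they provide through fresh \textsf{0-edges} so that the cut weight of every feasible solution is preserved exactly. Concretely, pick any bad vertex $v$, delete it, and make its neighbourhood $N_G(v)$ a clique by inserting a \textsf{0-edge} between every pair of neighbours of $v$ that are not already adjacent. Deleting $v$ removes only \textsf{0-edges} (all of $v$'s edges), so no surviving vertex loses its only \textsf{1-edge}, and every inserted edge is a \textsf{0-edge}, so no new \textsf{1-edge} is created; hence no new bad vertex arises. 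Repeating this a linear number of times yields a graph $G'$ in which every vertex has an incident \textsf{1-edge}, and since each step adds $O(n^2)$ edges the whole construction is polynomial. It then suffices to prove that a single step preserves the optimum and to compose the steps by induction.

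For one step, let $G'$ be obtained from $G$ by deleting the bad vertex $v$ and adding the \textsf{0-edge} clique on $N_G(v)$. For the forward direction, take a feasible $S$ in $G$ with $w(\delta_G(S)) = \psi$. If $v \notin S$ then $G'[S]$ is a supergraph of $G[S]$ on the same vertices and is thus connected, so $S$ is already feasible in $G'$. If $v \in S$, I claim $S \setminus \{v\}$ is feasible in $G'$: each component of $G[S] \setminus \{v\}$ is attached to $v$ through some vertex of $N_G(v) \cap S$, and the new clique makes all such attachment vertices pairwise adjacent, reconnecting the components. In either case $v$ carries no \textsf{1-edge} and every added edge is a \textsf{0-edge}, so the set of \textsf{1-edges} crossing the cut is unchanged and the weight is still $\psi$; hence $G'$ has a solution of weight at least $\psi$ whenever $G$ does.

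The backward direction is the step I expect to be the main obstacle, since I must rule out that the shortcut \textsf{0-edges} let $G'$ admit connected sets with no analogue in $G$. Given a feasible $S'$ in $G'$ with $w(\delta_{G'}(S')) = \psi$, set $S = S' \cup \{v\}$ whenever a connecting path of $G'[S']$ uses at least one added edge (so $N_G(v) \cap S' \neq \emptyset$), and $S = S'$ otherwise. To see that $G[S]$ is connected, replace each added edge $(x,y)$ with $x,y \in N_G(v)$ appearing on such a path by the length-two path $x$--$v$--$y$, which exists in $G$ precisely because $x,y \in N_G(v)$. This is exactly why the \emph{clique} on $N_G(v)$, rather than a cheaper star or path, is the correct patch: any two attachment vertices that $S'$ relies on must be directly reconnectable through $v$, regardless of which other neighbours of $v$ lie in $S'$. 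Reinserting $v$ again leaves the crossing \textsf{1-edges} untouched, so $w(\delta_G(S)) = \psi$, giving a solution of weight at least $\psi$ in $G$. Combining both directions shows the single step preserves the threshold for every $\psi$, and the claim follows by induction over the deleted bad vertices.
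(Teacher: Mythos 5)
Your proof is correct and takes essentially the same approach as the paper: the identical single-step reduction (delete a vertex incident only to \textsf{0-edges} and add a \textsf{0-edge} clique on its neighborhood), the same case analysis in both directions, and the same induction over bad vertices. Your backward direction, replacing each added edge $(x,y)$ by the path $x$--$v$--$y$, is just a slightly more explicit rendering of the paper's observation that reinserting $v$ restores connectivity between all pairs of its neighbors.
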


\begin{proof}
  Let $v \in V$ be a vertex in $G$ that has only \textsf{0-edges} incident on it and let $\{v_1,v_2,\ldots,v_l\}$ denote the set of its neighbors. Consider the graph $G'$ obtained from $G$ by deleting $v$ along with all its incident edges and adding \textsf{0-edges} between every pair of its neighbors $\{v_i,v_j\}$ such that $\{v_i, v_j\} \notin E$. Let $S$ denote a feasible solution of weight $\psi$ in $G$. If $v \notin S$, then clearly $S' = S$ is the required solution in $G'$. If $v \in S$, we set $S' = S \setminus \{v\}$ and we claim that $G'[S']$ is connected if $G[S]$ is connected and $\sum_{e \in \delta_{G'}(S')} w(e) = \sum_{e \in \delta_G(S)} w(e)$. The latter part of the claim is true since all the edges that we delete and add are \textsf{0-edges}. To prove the former part, notice that if $v$ is not a cut vertex in $G[S]$ then $G[S']$ must be connected. On the other hand, even if $v$ is a cut vertex, the new edges added among all pairs of $v$'s neighbors ensure that $G[S']$ is connected. Finally, to prove the other direction, suppose we have a feasible solution $S'$ of weight $\psi$ in $G'$. Now, if $G[S']$ is connected, then $S = S'$ is a feasible solution in $G$ of weight $\psi$. Otherwise, set $S = S' \cup \{v\}$. Since $v$ creates a path between all pairs of its neighbors, $G[S]$ is connected if $G'[S']$ is connected and is thus a feasible solution of the same weight. The proof of the lemma follows from induction.
\qed
\end{proof}

From now on, we will assume, without loss of generality, that every vertex of $G$ has at least one \textsf{1-edge} incident on it. 
We now introduce some new definitions that would help us to present the main algorithmic ideas.
We denote by $W_G(v)$ the total weight of edges incident on a vertex $v$ in $G$, i.e., $W_G(v) = \sum_{e:v\in e} w(e)$. In other words, $W_G(v)$ is total number of \textsf{1-edges} incident on $v$. Further let $\mone$ be the total number of \textsf{1-edges} in the graph.
The following notion of an $\alpha$-thick tree is a crucial component of our algorithm.

 \begin{definition}[$\alpha$-Thick Tree]
Let $G=(V,E)$ be a graph with $n$ vertices and $\mone$ \textsf{1-edges}. A subtree $T \subseteq G$ (not necessarily spanning), with leaf set $L$, is said to be $\alpha$-thick if $\sum_{v \in L} W_G(v) \geq \alpha \mone$.
 \end{definition}

The following lemma shows that this notion of an $\alpha$-thick tree is intimately connected with the \bcmc\ problem.
\begin{restatable}{lem}{alphaalpha}
\label{lem:alphatt}
  For any $\alpha > 0$, given a polynomial time algorithm $A$ that computes an $\alpha$-thick tree $T$ of a graph $G$, we can obtain an $\frac{\alpha}{4}$-approximation algorithm  for the \bcmc\ problem on $G$.
\end{restatable}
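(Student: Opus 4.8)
The plan is to combine two observations: the optimum is bounded by the total number of 1-edges, and an $\alpha$-thick tree already carries enough boundary mass on its leaves to cut a constant fraction of $\alpha\mone$ with a connected set. First I would note that any feasible cut contains at most all 1-edges, so $\mathrm{OPT}\le\mone$. Hence it suffices to exhibit a connected $S$ with $w(\delta(S))\ge\frac{\alpha}{4}\mone$, which then gives $w(\delta(S))\ge\frac{\alpha}{4}\mathrm{OPT}$. I would run the assumed algorithm $A$ to obtain an $\alpha$-thick tree $T$ with leaf set $L$, so that $\sum_{v\in L}W_G(v)\ge\alpha\mone$.

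The main obstacle is the connectivity constraint: we would like to freely decide, for each leaf, which side of the cut it lands on, but arbitrary vertex choices need not induce a connected subgraph. The key structural fact that resolves this is that the internal (non-leaf) vertices of $T$ induce a connected subtree, and that in any tree on at least three vertices the unique tree-neighbor of a leaf is an internal vertex. Consequently $S=(V(T)\setminus L)\cup L'$ is connected for \emph{every} choice of leaf subset $L'\subseteq L$, since each included leaf attaches through its tree edge to an internal vertex already in $S$. This decouples connectivity entirely from the leaf choices, and is the step I expect to require the most care.

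I would then analyze the randomized solution that includes all internal vertices of $T$ and places each leaf into $S$ independently with probability $1/2$. A short case check shows that every 1-edge incident on at least one leaf is cut with probability exactly $1/2$: a leaf--leaf edge is cut iff its endpoints disagree (probability $1/2$); a leaf--internal edge is cut iff the leaf is excluded (the internal vertex is always in $S$); and a leaf--outside edge is cut iff the leaf is included (the outside endpoint is never in $S$). Letting $M$ denote the number of distinct 1-edges incident on $L$, a double-counting bound $\sum_{v\in L}W_G(v)\le 2M$ gives $M\ge\frac{\alpha\mone}{2}$, whence $\mathbb{E}[w(\delta(S))]\ge\frac12 M\ge\frac{\alpha\mone}{4}$; the factor $4$ thus splits as $2$ from the cut probability and $2$ from the double counting of leaf--leaf edges.

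Finally I would derandomize by the method of conditional expectations, which is routine here because each edge's cut event depends on at most two leaf coin flips, producing a deterministic connected $S$ with $w(\delta(S))\ge\frac{\alpha\mone}{4}\ge\frac{\alpha}{4}\mathrm{OPT}$. The only degenerate case is a tree on at most two vertices, which has no internal vertex; there the thickness bound forces some leaf $v$ with $W_G(v)\ge\frac{\alpha\mone}{2}$, so the single-vertex set $S=\{v\}$ is trivially connected and already cuts $W_G(v)\ge\frac{\alpha}{4}\mathrm{OPT}$ edges.
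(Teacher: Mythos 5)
Your proof is correct and takes essentially the same route as the paper's: the paper splits the leaf set $L$ via a randomized max-cut on $G[L]$ and takes the better of the two connected sets $T\setminus L_1$ and $T\setminus L_2$ (i.e., the internal vertices plus one side of the leaf split), which is just a repackaging of your uniformly random leaf subset $S=(V(T)\setminus L)\cup L'$ in which every leaf-incident \textsf{1-edge} is cut with probability $1/2$. Both arguments rest on the same three facts---deleting leaves preserves connectivity, $\sum_{v\in L}W_G(v)\leq 2M$ by double counting, and $w(\delta(OPT))\leq \mone$---and incur the identical $2\times 2$ loss; your explicit derandomization and the degenerate two-vertex tree case are details the paper leaves implicit.
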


\begin{proof}
Given a graph $G = (V,E)$ and weight function $w:E\rightarrow \{0,1\}$, we use Algorithm $A$ to compute an $\alpha$-thick tree $T$, with leaf set $L$. 
Let $m_L$ denote the number of \textsf{1-edges} in $G[L]$, the  subgraph  induced by $L$ in the graph $G$. We now partition $L$ into two disjoint sets $L_1$ and $L_2$ such that the number of \textsf{1-edges} in $\delta(L_1,L_2) \geq \frac{m_L}{2}$. 
This can be done by applying the standard randomized algorithm for \mc\ (e.g. see ~\cite{motwani1995randomized}) on $G[L]$ after deleting all the \textsf{0-edges}.   
Now, consider the two connected subgraphs $T \setminus L_1$ and $T \setminus L_2$. We first claim that every \textsf{1-edge} in $\delta(L)$ belongs to either $\delta(T \setminus L_1)$ or $\delta(T \setminus L_2)$. Indeed, any \textsf{1-edge} $e$ in $\delta(L)$, belongs to one of the four possible sets, namely  $\delta(L_2, T\setminus L)$, $\delta(L_1, V\setminus T)$, $\delta(L_1, T\setminus L)$ and $\delta(L_2, V\setminus T)$. In the first two cases, $e$ belongs to $\delta(T\setminus L_2)$ while in the last two cases, $e$ belongs $\delta(T\setminus L_1)$, hence the claim. Further, every \textsf{1-edge} in $\delta(L_1, L_2)$ belongs to both $\delta(T \setminus L_1)$ and $\delta(T \setminus L_2)$. Hence, we have - 
\begin{align}
  \label{eq:1}
  \sum_{e\in \delta(T \setminus L_1)} w(e) +  \sum_{e\in \delta(T \setminus L_2)} w(e) &= \sum_{e\in \delta(L)} w(e) + 2\sum_{e\in \delta(L_1,L_2)} w(e) \\
\geq \sum_{e\in \delta(L)} w(e) + m_L &\geq \frac{1}{2} \sum_{v \in L} W_G(v) \geq \frac{\alpha \mone}{2}
\end{align}
Hence, the better of the two solutions $T \setminus L_1$ or $T \setminus L_2$ is guaranteed to have a cut of weight at least $\frac{\alpha \mone}{4}$, where $\mone$ is the total number of \textsf{1-edges} in $G$. To complete the proof we note that for any optimal solution $OPT$, $w(\delta(OPT)) \leq \mone$.
\qed
\end{proof}

Thus, if we have an algorithm to compute $\alpha$-thick trees, Lemma \ref{lem:alphatt} provides an $\Omega(\alpha)$-approximation algorithm for the \bcmc\ problem. Unfortunately, there exist graphs that do not contain $\alpha$-thick trees for any non-trivial value of $\alpha$. For example, let $G$ be a \emph{path graph} with $n$ vertices and $m = n-1$ \textsf{1-edges}. It is easy to see that for any subtree $T$, the sum of degrees of the leaves is at most 4. In spite of this setback, we show that the notion of $\alpha$-thick trees is still useful in obtaining a good approximation algorithm for the \bcmc\ problem. In particular, Lemma \ref{lem:good-leaves} and Theorem \ref{thm:logn} show that path graph is the \emph{only} bad case, i.e., if the graph $G$ does not have a long induced path, then one can find an $\Omega(\frac{1}{\log n})$-thick tree. Lemma \ref{lem:deg2} shows that we can assume without loss of generality that the \bcmc\ instance does not have such a long induced path.
\\

\subsubsection*{Shrinking Thin Paths.}
A natural idea to handle the above ``bad'' case is to get rid of such long paths that contain only vertices of degree two by contracting the edges. We refer to a path that only contains vertices of degree two as a \textsf{d-2} path.
Further, we define the length of a \textsf{d-2} path as the number of \emph{vertices} (of degree two) that it contains. The following lemma shows that we can assume without loss of generality that the graph $G$ contains no ``long'' \textsf{d-2} paths.

\begin{restatable}{lem}{degtwo}
Given a graph $G$, we can construct, in polynomial time, a graph $G'$ with no \textsf{d-2} paths of length $\geq 3$ such that $G'$ has a \bcmc\ solution $S'$ of cut weight ($w(\delta(S'))$) at least $\psi$ if and only if $G$ has a \bcmc\ solution $S$ of cut weight at least $\psi$. Further, given the solution $S'$ of $G'$, we can recover $S$ in polynomial time.
\label{lem:deg2}
\end{restatable}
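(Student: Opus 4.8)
The plan is to repeatedly replace each long \textsf{d-2} path by a constant‑length gadget that exhibits the same ``cut behaviour'', and to argue that this preserves the optimum. Since the construction does not depend on $\psi$, the required equivalence ``$G'$ has a solution of weight $\geq \psi$ iff $G$ does'' for every $\psi$ amounts to showing that the maximum values of $w(\delta(\cdot))$ over connected sets agree in $G$ and $G'$, together with value‑preserving solution maps in both directions; I would produce both maps explicitly, so that the promised polynomial‑time recovery of $S$ from $S'$ is exactly the backward map.

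The structural heart is to understand how a connected $S$ can meet a maximal \textsf{d-2} path $P: u - x_1 - \cdots - x_k - v$, where $x_1,\dots,x_k$ have degree two and $u,v$ are its attachment points. Since each $x_i$ has its only neighbours on $P$, in any connected $S$ that also meets the rest of the graph the set $S \cap \{x_1,\dots,x_k\}$ must be a prefix $x_1,\dots,x_a$ anchored at $u$ together with a suffix $x_b,\dots,x_k$ anchored at $v$ (either possibly empty), as a strictly interior block would form its own component. Hence at most two edges of $P$ cross the cut, and the contribution of $P$ to $w(\delta(S))$ is captured by a constant‑size ``profile'': the maximum number of \textsf{1-edges} of $P$ that can be cut as a function of the pair $([u\in S],[v\in S])$, refined by whether $P$ is forced to be the unique $u$--$v$ connector in $G[S]$. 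Using Claim~\ref{lem:only-good} (every $x_i$ carries a \textsf{1-edge}, so $P$ has at least two \textsf{1-edges} at distinct positions once $k\geq 3$), I would show that when $u,v$ have degree at least three the profile is: $0$ when $\{u,v\}\cap S=\emptyset$; exactly one \textsf{1-edge} when one of $u,v$ lies in $S$; and, when $u,v\in S$, either a through‑connection with $0$ cut edges or a ``gapped'' prefix/suffix cutting two \textsf{1-edges}.

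Given this profile, the construction in the main case is to delete $x_1,\dots,x_k$ and the path edges and insert one fresh degree‑two vertex $y$ joined to $u$ and $v$ by two \textsf{1-edges}, a \textsf{d-2} path of length one. A short check shows $u - y - v$ realises the identical profile: $\{u,v\}\cap S=\emptyset$ cuts nothing; one endpoint in $S$ cuts one \textsf{1-edge} (whichever way $y$ is chosen); and with $u,v\in S$, taking $y\in S'$ connects $u$ to $v$ with no cut edge while dropping $y$ cuts both incident \textsf{1-edges}. This match lets me move solutions both ways preserving connectivity and cut weight: a connected $S$ of $G$ yields $S'$ of no smaller weight by copying the membership of all vertices off $P$ and setting $y\in S'$ exactly when $P$ served as the $u$--$v$ bridge (otherwise matching $P$'s configuration); symmetrically, $S'$ is lifted to $G$ by replaying its gadget configuration on $P$ either as the all‑interior connector or as a prefix/suffix cutting the corresponding \textsf{1-edges}, which is the polynomial‑time recovery. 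Iterating over all maximal \textsf{d-2} paths, each replacement keeps the attachment points at degree at least three and creates only length‑one paths, so the process terminates in polynomially many steps with $G'$ free of \textsf{d-2} paths of length $\geq 3$ and with the same optimum.

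I expect the main obstacle to be the connectivity bookkeeping rather than the edge counting: the profile is only valid after separating the case where $P$ is genuinely needed to keep $G[S]$ connected (forcing the whole interior into $S$, hence a zero cut) from the case where $u,v$ are joined elsewhere (permitting the two‑edge gapped cut), and both solution maps must honour this split to keep the transported set connected. A secondary but necessary effort goes to the degenerate attachment patterns---a pendant \textsf{d-2} path whose far endpoint has degree one (profile $0/1$, which I would replace by a length‑one pendant \textsf{d-2} path with two \textsf{1-edges} so that an internal singleton still attains cut weight $2$), the case $u=v$, and the whole graph being a cycle or path of degree‑two vertices (replaced by a constant‑length path gadget, never a cycle, since every cycle itself contains a forbidden long \textsf{d-2} path)---and to verifying that the self‑contained solutions living strictly inside a path, of cut weight at most two, are matched by the gadget's internal vertex, so that the optimum, and hence the stated equivalence, holds for every value of $\psi$.
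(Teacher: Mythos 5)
Your proposal is correct, but it takes a genuinely different route from the paper's proof. The paper never reasons about a maximal \textsf{d-2} path as a whole: it repeatedly applies a purely local contraction, taking a window $\wp = [v_0,e_0,v_1,e_1,v_2,e_2,v_3]$ with $deg(v_0) \neq 2$ and $v_1,v_2,v_3$ of degree two, replacing the two interior vertices by one new vertex $v_{new}$, and assigning the two new edge weights according to the number $n_\wp$ of \textsf{1-edges} among $\{e_0,e_1,e_2\}$ (both weight $1$ if $n_\wp \geq 2$; weights $0,1$ if $n_\wp = 1$; both $0$ otherwise); correctness is then a case analysis on $|\delta(S)\cap E_\wp| \in \{0,1,2\}$ in each direction, iterated until no long \textsf{d-2} path remains. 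That local rule makes no use of Claim~\ref{lem:only-good} (so it works even if some path carries no \textsf{1-edges} at all, since the gadget weights encode the window's actual \textsf{1-edge} count) and structurally sidesteps your degenerate patterns: the choice $deg(v_0)\neq 2$, $deg(v_3)=2$ forces $v_0 \neq v_3$, so multi-edges and the $u=v$ pendant-cycle case never arise, and the all-cycle graph is dismissed as trivial up front. Your one-shot replacement of each maximal path by a single vertex with two fixed \textsf{1-edges} is arguably cleaner and more conceptual---the prefix/suffix structure of a connected $S$ on a \textsf{d-2} path and the resulting profile (empty, one endpoint, through-connection, gapped, interior-only) is exactly the right invariant, and your two value-preserving maps are sound---but it buys this at two costs: (i) it genuinely needs Claim~\ref{lem:only-good} to guarantee each long path carries two \textsf{1-edges}, so that the fixed gadget's profile both dominates the path's and is attained by it (note in passing that your gadget preserves the only-good invariant, since $u$ and $v$ each gain a \textsf{1-edge}); and (ii) the degenerate cases you flag ($u = v$, pendant paths, the whole graph a cycle) are real obligations in your scheme, sketched but not fully executed, though routine given your profile machinery (e.g., for $u=v$ a triangle gadget $u\!-\!y_1\!-\!y_2\!-\!u$ with suitable weights works). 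One point worth stating explicitly in your write-up: in the one-endpoint configuration the original path may contribute $0$ (when the unique cut edge is a \textsf{0-edge}) while your gadget contributes exactly $1$; this is harmless because your forward map only needs the gadget contribution to weakly dominate and your backward map only needs the path to attain the gadget's value, but as written your profile claim of ``exactly one \textsf{1-edge}'' is an overstatement for the original path.
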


\begin{proof}
We may assume that $G$ is connected, because otherwise we can handle each component separately. We further assume that $G$ is not a simple cycle, otherwise it is trivial to solve such an instance. If $G$ does not have a \textsf{d-2} path of length $\geq 3$, then trivially we have $G' = G$. Otherwise, let $\wp = [v_0,e_0,v_1,e_1,v_2,e_2,v_3]$  be a path in $G$ such that $v_1, v_2$ and $v_3$ have degree two and $deg(v_0) \neq 2$. Note that such a path $\wp$ must exist as $G$ is not a simple cycle. 
We now perform the following operation on $G$ to obtain a new graph $G_{new}$: Delete these elements $\{e_0, v_1, e_1, v_2, e_2\}$. Add a new vertex $v_{new}$ and edges $e_0' =(v_0,v_{new})$ and $e_1' = (v_{new}, v_3)$. Since $deg(v_0) \neq 2$ and $deg(v_3) = 2$, we are guaranteed that $v_0 \neq v_3$ and hence we do not introduce any multi-edges. 
The weights on the new edges are determined as follows - Let $n_\wp$ denote the number of \textsf{1-edges} in $E_\wp = \{e_0, e_1,e_2\}$. If $n_\wp \geq 2$, we set $w(e'_0) =w(e'_1) = 1$. If $n_\wp = 1$, then we set $w(e'_0) = 0$ and $w(e'_1) = 1$. Otherwise, we set $w(e'_0) = w(e'_1) = 0$. We claim that $G_{new}$ has a \bcmc\ solution $S'$ of cut weight at least $\psi$ if and only if $G$ has a solution $S$ of cut weight at least $\psi$. 

Let us first assume that there is a set $S$ in $G$ that is a solution to the \bcmc\ problem with cut weight $\psi$. We now show that there exists a $S'$ in $G_{new}$ that is a solution to the \bcmc\ problem with cut weight at least $\psi$. The proof in this direction is done for three possible cases, based on the cardinality of $\delta(S)\cap E_\wp$. We note that $|\delta(S)\cap E_\wp|$ is $\leq 2$, since $G[S]$ must be connected.

{\bf Case 1.} $|\delta_G(S)\cap E_\wp| = 2$. Note that since $S$ is connected, we must have either (i) $S \subseteq \{v_1,v_2\}$ or (ii) $\{v_0,v_3\} \subseteq S$. In the former case, we set $S' =  \{v_{new}\}$ and the claim follows by the definition of $w(e'_0)$ and $w(e'_1)$. In the latter case, we set $S' = S \setminus \{v_1, v_2\}$. Since $v_1$ and $v_2$ are vertices of degree two, $G_{new}[S']$ is connected. Further, every edge $e \in \delta_G(S) \setminus E_\wp$ also belongs to $\delta_{G_{new}}(S')$. The claim follows once we observe that both $e'_0$ and $e'_1$ are in $\delta_{G_{new}}(S')$. 

{\bf Case 2.} $|\delta_G(S)\cap E_\wp| = 1$. In this case, we must have either $v_0 \in S$ or $v_3 \in S$ but not both. Let us first assume $v_0\in S$. We set $S' = (S\cup \{v_{new}\})\setminus \{v_1, v_2\}$. It is clear that if $G[S]$ is connected, so is $G_{new}[S']$. 
Due to the removal of $v_1$ and $v_2$, we have $\delta_G(S) \setminus \delta_{G_{new}}(S') = \{e_i\}$ for some edge $e_i \in E_\wp$. On the other hand, due to the addition of $v_{new}$, we have $\delta_{G_{new}}(S') \setminus \delta_{G}(S) = \{e'_1\}$ and the claim follows since $w(e'_1) \geq w(e_i)$ for any $e_i \in E_\wp$.
Now assume that $v_3\in S$. In this case, we set $S' = S\setminus \{v_1, v_2\}$. Since $v_{new}\notin S'$, we again have $e_1' \in \delta_{G_{new}}(S')$ and the proof follows as above. 

{\bf Case 3.} $|\delta_G(S)\cap E_\wp| = 0$. In this case, one of the following holds, either (i) $\{v_0,v_1,v_2,v_3\} \subseteq S$ or (ii) $\{v_0,v_1,v_2,v_3\} \cap S = \phi$. If the latter is true, the proof is trivial by setting $S' = S$. In the former case, we set $S' = S \setminus \{v_1,v_2\} \cup \{v_{new}\}$. The addition of $v_{new}$ maintains connectivity between $v_0$ and $v_3$ and hence since $S$ is connected, so is $S'$. Further, we have $\delta_G(S) = \delta_{G_{new}}(S')$ since no edge in $\delta_G(S)$ in incident on $v_1$ or $v_2$.

  In order to prove the other direction, we assume that $S'$ is a solution to the \bcmc\ problem on $G_{new}$ with a cut weight of $\psi$. We now construct a set $S$ that is a solution to \bcmc\ on $G$ of weight at least $\psi$. The proof proceeds in three cases similarly. 

{\bf Case 1.} Both $e_0' \in \delta_{G_{new}}(S')$ and $e'_1 \in \delta_{G_{new}}(S')$. One of the following holds - (i) $S' = \{v_{new}\}$ or (ii) $\{v_0, v_3\} \subseteq S'$.
In the former case, let $S$ be the subset of $\{v_1, v_2\}$ having the largest weight cut. By construction, we have that weight of the cut $\delta(S)$ is at least the sum of weights of $e_0'$ and $e_1'$. For the latter, let $S$ to be the best among $S', S' \cup \{v_1\},$ and $S' \cup \{v_2\}$ and the proof follows as above.

{\bf Case 2.} Either $e_0'\in \delta_{G_{new}}(S')$ or $e_1'\in \delta_{G_{new}}(S')$ but not both. Let $e_{max}$ be the edge of maximum weight in $E_\wp$. The edge $e_{max}$ splits the path $\wp$ into two connected components one containing $v_0$, call it $\wp_0$ and the other containing $v_3$, call it $\wp_3$. Now to construct $S$, we delete $v_{new}$ from $S'$ (if it contains it) and add the component $\wp_0$ if $v_0\in S'$ or the component $\wp_3$ if $v_3\in S'$. Again connectivity is clearly preserved. We now argue that the cut weight is also preserved. Indeed, this is true since we have that $w(e_{max}) \geq max(w_{e_0'}, w_{e_1'})$ and the rest of the cut edges in $S'$ remain as they are in $S$.

 {\bf Case 3.} None of $e_0', e_1'$ belong to $\delta_{G_{new}}(S')$. In this case, if $v_{new}\notin S'$, then trivially $S = S'$ works. Otherwise, we set $S = S' \cup \{v_1, v_2\}$. It is easy to observe that both connectivity and all the cut edges are preserved in this case. 

Now, to construct $G'$, we repeatedly apply the above contraction as long as possible. This will clearly take polynomial time as in each iteration, we reduce the number of degree-2 vertices by 1. 
 Hence we have the claim.   
\qed
\end{proof}

\subsubsection*{Spanning Tree with Many Leaves.}
Assuming that the graph has no long \textsf{d-2} paths, the following lemma shows that we can find a spanning tree $T$ that has $\Omega(n)$ leaves. Note that Claim \ref{lem:only-good} now guarantees that there are $\Omega(n)$ \textsf{1-edges} incident on the leaves of $T$. 

\begin{restatable}{lem}{goodleaves}
Given a graph $G=(V,E)$ with no \textsf{d-2} paths of length $\geq 3$, we can obtain, in polynomial time, a spanning tree $T = (V,E_T)$ with at least $\frac{n}{14}$ leaves. 
\label{lem:good-leaves}
\end{restatable}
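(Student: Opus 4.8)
The plan is to exploit the fact that a graph with no long \textsf{d-2} path has relatively few degree-two vertices, and to combine this with the classical lower bound on the number of leaves in a spanning tree of a graph of bounded degree-two count. First I would recall the folklore result that any connected graph $G$ on $n$ vertices in which at most $n_2$ vertices have degree exactly two admits a spanning tree with many leaves; more precisely, one can build a spanning tree greedily (for instance by a DFS-type argument, or by a local improvement / edge-swapping argument) so that a constant fraction of the \emph{branch} vertices — those of degree $\geq 3$ in $G$ — become leaves in $T$. The key quantitative fact I would use is the standard bound that a connected graph with minimum degree $\geq 3$ has a spanning tree with at least $n/4 + 2$ leaves (Kleitman--West), but here minimum degree is not three, so I must first control how many degree-$1$ and degree-$2$ vertices there are.

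The crucial structural step is to bound the number of degree-two vertices using the hypothesis. Since $G$ has no \textsf{d-2} path of length $\geq 3$, every maximal run of consecutive degree-two vertices has at most two vertices. I would formalize this by looking at the subgraph induced by the degree-two vertices: its connected components are paths (or isolated vertices) of at most two vertices each. The plan is to charge each such small block to an incident branch vertex (a vertex of degree $\geq 3$) or to a leaf, so that the number of degree-two vertices is at most a constant multiple of the number of higher-degree vertices plus the number of leaves. This gives an inequality of the form $n_2 = O(n_{\geq 3} + n_1)$, which is what converts the structural hypothesis into a usable counting bound. Combined with $n = n_1 + n_2 + n_{\geq 3}$, this shows that a constant fraction of the vertices have degree $\ne 2$, and in particular the number of branch vertices plus existing leaves is $\Omega(n)$.

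Finally I would assemble the spanning tree. I would run a leaf-maximizing (or simply a good greedy/DFS) spanning-tree construction on $G$ and argue that each branch vertex can be made to contribute, up to a constant loss, a leaf — or else account for it through an averaging argument over the degree-two blocks that were suppressed. The constant $\tfrac{1}{14}$ suggests that the authors track these losses explicitly: roughly, each degree-two block ``wastes'' a bounded number of potential leaves, and the charging scheme above guarantees enough surviving leaves. I expect the main obstacle to be precisely this bookkeeping: making the charging argument tight enough to yield an honest constant such as $\tfrac{1}{14}$, rather than merely $\Omega(n)$. The clean way to handle it is to first contract each degree-two block (replacing short \textsf{d-2} paths by single edges) to obtain an auxiliary graph $H$ of minimum degree $\geq 3$ whose vertex count is $\Omega(n)$ by the block-counting bound, apply the Kleitman--West leaf bound to a spanning tree of $H$, and then lift the tree back to $G$, verifying that the contracted degree-two vertices can be reattached without destroying more than a constant fraction of the leaves. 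The delicate point in the lift is ensuring the reinserted degree-two vertices do not themselves have to become internal tree vertices in a way that demotes too many leaves of $H$; the bound on block length ($\leq 2$) is exactly what keeps this controlled.
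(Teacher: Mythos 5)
Your proposal hinges on the structural claim that, because every maximal run of degree-two vertices has length at most $2$, the number of degree-two vertices satisfies $n_2 = O(n_{\geq 3} + n_1)$, so that the contracted auxiliary graph $H$ has $\Omega(n)$ vertices. This claim is false, and with it the whole contract--apply-Kleitman--West--lift pipeline collapses. Consider the generalized theta graph: two hubs $u,v$ joined by $d$ internally disjoint paths $u\text{--}a_i\text{--}b_i\text{--}v$. Every maximal \textsf{d-2} run is $\{a_i,b_i\}$, of length $2$, so the hypothesis of the lemma holds; yet $n = 2d+2$ while $n_2 = 2d = n-2$ and $n_1 + n_{\geq 3} = 2$. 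The flaw in your charging argument is that a branch vertex of degree $d$ is incident to up to $d$ blocks, so the charge per branch vertex is not bounded by a constant. In this example the contracted graph $H$ has exactly two vertices (joined by $d$ parallel edges, which is a second problem: contraction need not produce a simple graph of minimum degree $3$, so Kleitman--West does not apply as stated), and any spanning tree of $H$ has $O(1)$ leaves. The lemma is nevertheless true here -- take the tree consisting of one full path $u\text{--}a_1\text{--}b_1\text{--}v$ and attach each remaining $a_i$ pendant to $u$ and each $b_i$ pendant to $v$, giving about $n$ leaves -- but note that essentially \emph{all} of these leaves are degree-two vertices of $G$ coming from the reinserted blocks. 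Your framework treats block reinsertion purely as a loss to be bounded, so it has no mechanism to produce these leaves; the proof would need a separate case in which the number of blocks is large compared to $|V(H)|$ and the non-tree blocks themselves are harvested as pendant leaves.

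The paper avoids this trap by measuring degree in the \emph{tree} rather than in $G$. It starts from an arbitrary spanning tree $T$ and, as long as $T$ contains a path of seven consecutive vertices of tree-degree two, uses the hypothesis on $G$ to find among the three middle vertices one with $\deg_G \geq 3$; swapping in an incident non-tree edge and deleting an extremal edge of the created cycle strictly decreases the number of tree-degree-two vertices, so the process terminates with a tree $T^{(1)}$ having no \textsf{d-2} path (in $T^{(1)}$) of length $\geq 7$. A token argument -- each degree-two vertex passes its token to its first non-degree-two descendant, and each such vertex collects at most $7$ tokens -- then shows at least $n/7$ vertices have tree degree $\neq 2$, and since the average degree in a tree is below $2$, at least $n/14$ of them are leaves. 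In the theta example this works exactly because the vertices $a_i, b_i$ have $G$-degree two but can be given tree degree one; your argument, by classifying vertices according to their degree in $G$, discards them at the outset.
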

\begin{proof}
  Let $T$ be any spanning tree of $G$. We note that although $G$ does not have \textsf{d-2} paths of length $\geq$ 3, such a guarantee does not hold for paths in $T$. Suppose that there is a \textsf{d-2} path $\wp$ of length 7 in $T$. Let the vertices of this path be numbered $v_1, v_2, \ldots, v_7$ and consider the vertices $v_3, v_4, v_5$. Since $G$ does not have any \textsf{d-2} path of length 3, there is a vertex $v_i, i \in \{3,4,5\}$ such that $deg_G(v_i) \geq 3$. We now add an edge $e = \{v_i, w\}$ in $G \setminus T$ to the tree $T$. The cycle $C$ that is created as a result must contain either the edge $\{v_1,v_2\}$ or the edge $\{v_6,v_7\}$. We delete this edge to obtain a new spanning tree $T'$. It is easy to observe that the number of vertices of degree two in $T'$ is strictly less than that in $T$. This is because, although the new edge $\{v_i, w\}$ can cause $w$ to have degree two in $T'$, we are guaranteed that the vertex $v_i$ will have degree three and vertices $v_1$ and $v_2$ (or $v_6$ and $v_7$) will have degree one. Hence, as long as there are \textsf{d-2} paths of length 7 in $T$, the number of vertices of degree two can be strictly decreased.  Thus this process must terminate in at most $n$ steps and the final tree $T^{(1)}$ obtained does not have any \textsf{d-2} paths of length $\geq 7$.

We now show that the tree $T^{(1)}$ contains $\Omega(n)$ leaves by a simple charging argument. Let the tree $T^{(1)}$ be rooted at an arbitrary vertex. We assign each vertex of $T^{(1)}$ a token and redistribute them in the following way : Every vertex $v$ of degree two in $T^{(1)}$ gives its token to its first non degree two descendant, breaking ties arbitrarily. Since there is no \textsf{d-2} path of length $\geq 7$, each non degree two vertex collects at most 7 tokens. Hence, the number of vertices not having degree two in $T^{(1)}$ is at least $\frac{n}{7}$. Further, since the average degree of all vertices in a tree is at most 2, a simple averaging argument shows that $T^{(1)}$ must contain at least $\frac{n}{14}$ vertices of degree one, i.e., $\frac{n}{14}$ leaves.
\qed
\end{proof}

\subsection*{Obtaining an $\Omega(\frac{1}{\log n})$ Approximation}

We now have all the ingredients required to obtain the $\Omega(\frac{1}{\log n})$ approximation algorithm. We observe that if the graph $G$ is sparse, i.e. $\mone \leq c n \log n$ (for a suitable constant $c$), then the tree obtained by using Lemma \ref{lem:good-leaves} is an $\Omega(\frac{1}{\log n})$-thick tree and thus we obtain the required approximate solution in this case. On the other hand, if the graph $G$ is sparse, then we use Lemma \ref{lem:good-leaves} to obtain a spanning tree, delete the leaves of this tree, and then repeat this procedure until we have no more vertices left. 
Since, we delete a constant fraction of vertices in each iteration, the total number of iterations is $O(\log n)$. We then choose the ``best'' tree out of the $O(\log n)$ trees so obtained and show that it must be an $\alpha$-thick tree, with $\alpha=\Omega(\frac{1}{\log n})$. 
Finally, using Lemma \ref{lem:alphatt}, we obtain an $\Omega(\frac{1}{\log n})$ approximate solution as desired. We refer to Algorithm \ref{alg:logn} for the detailed algorithm.

\begin{algorithm}[htbp]
\DontPrintSemicolon
 {\bf Input}: Graph $G = (V, E)$\;
 {\bf Output:} A subset $S\subseteq V$, such that $G[S]$ is connected\;
Set $G_1(V_1,E_1) = G$, $n_1 = |V_1|$\;
Let $\mone \leftarrow$ Number of \textsf{1-edges} in $G$\;
Use Lemma~\ref{lem:good-leaves} to obtain a spanning tree $T_1$ of $G_1$ with leaf set $L_1$\;
\If{$\mone \leq c n \log n$} {
  Use Lemma~\ref{lem:alphatt} on $T_1$ to obtain a set connected $S$\;
  \Return $S$\;
}
$i = 1$\;
 \While{$G_i \neq \phi$}{\label{algm:line:alphatt}
    $E_{i+1} \leftarrow E_i \setminus (E[L_i] \cup \delta(L_i))$ \label{algm:line:delete_edges} \;
    $V_{i+1} \leftarrow V_i \setminus L_i$, $n_{i+1} = |V_{i+1}|$\;
    Contract degree-2 vertices in $G_{i+1}$ \label{algm:line:contract_edges}\;
    Use Lemma~\ref{lem:good-leaves} to obtain a spanning tree $T_{i+1}$ of $G_{i+1}$ with leaf set $L_{i+1}$\;
    $i = i+1$\;
 }
Choose $j = \argmax_i(\sum_{v \in L_i} deg_G(v))$\;
Use Lemma~\ref{lem:alphatt} on $T_j$ to obtain a connected set $S$\;
\Return $S$\;
\caption{Finding \alphatt\ trees}
\label{alg:logn}
\end{algorithm}

\begin{restatable}{thm}{main}
  \label{thm:logn}
  Algorithm \ref{alg:logn} gives an $\Omega(\frac{1}{\log n})$ approximate solution for the \bcmc\ problem.
\end{restatable}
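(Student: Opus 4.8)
The plan is to analyze Algorithm~\ref{alg:logn} by first reducing to a clean input and then treating its two branches separately. By Claim~\ref{lem:only-good} and Lemma~\ref{lem:deg2} I may assume that every vertex of $G$ carries a \textsf{1-edge} and that $G$ has no \textsf{d-2} path of length $\geq 3$; the former guarantees $W_G(v)\geq 1$ for every $v$, the latter lets me invoke Lemma~\ref{lem:good-leaves} on $G$ itself, and together they supply the baseline bound $OPT \leq \mone$ used at the end. The algorithm then branches on whether $\mone \leq cn\log n$.

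For the \emph{sparse} branch ($\mone \leq cn\log n$) the argument is immediate. Lemma~\ref{lem:good-leaves} yields a spanning tree $T_1$ with $|L_1|\geq n/14$ leaves, and since each leaf has at least one incident \textsf{1-edge}, $\sum_{v\in L_1}W_G(v)\geq n/14 \geq \mone/(14c\log n)$. Hence $T_1$ is $\Omega(1/\log n)$-thick, and Lemma~\ref{lem:alphatt} converts it into an $\Omega(1/\log n)$-approximate solution.

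The \emph{dense} branch ($\mone > cn\log n$) is where the real work lies, and I would organize it around two claims. First, each iteration deletes the leaf set $L_i$, whose size is $\geq n_i/14$ by Lemma~\ref{lem:good-leaves}, while the subsequent contraction only removes vertices (each elementary contraction of Lemma~\ref{lem:deg2} has a net effect of $-1$ vertex); thus $n_{i+1}\leq \frac{13}{14}n_i$, the loop runs $k=O(\log n)$ times, and $\sum_i n_i = O(n)$. Second, the averaging step, for which I track the total \textsf{1-edge} weight $W_i$ of $G_i$: round $i$ destroys weight $\Delta_i := w(E[L_i]\cup \delta(L_i))$ in its deletion step and an additional $\ell_i \geq 0$ in its contraction step, so telescoping $W_1 = \mone$ down to $W_{k+1}=0$ gives $\mone = \sum_i (\Delta_i + \ell_i)$. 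Because each elementary contraction destroys at most one \textsf{1-edge} and there are only $O(n)$ of them in total, $\sum_i \ell_i = O(n) \leq \mone/2$ in this regime, whence $\sum_i \Delta_i \geq \mone/2$. Every edge counted in $\Delta_i$ is incident to a leaf of $L_i$ in $G_i$, so $\sum_{v\in L_i}W_{G_i}(v)\geq \Delta_i$, giving $\sum_i \sum_{v\in L_i}W_{G_i}(v)\geq \mone/2$; by pigeonhole over the $k=O(\log n)$ rounds the chosen index $j=\argmax_i \sum_{v\in L_i}W_{G_i}(v)$ satisfies $\sum_{v\in L_j}W_{G_j}(v)\geq \frac{\mone}{2k}=\Omega(\mone/\log n)$. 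Applying Lemma~\ref{lem:alphatt} to $T_j$ inside $G_j$ then produces a connected cut of weight $\geq \frac14 \sum_{v\in L_j}W_{G_j}(v) = \Omega(\mone/\log n)$ in $G_j$, which lifts to a solution of at least the same value in $G$; together with $OPT \leq \mone$ this is the claimed $\Omega(1/\log n)$-approximation.

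The step I expect to be the main obstacle is the contraction bookkeeping inside the dense branch. One must simultaneously (i) bound the weight $\sum_i \ell_i$ lost to the per-round contractions so that it is negligible against $\mone$, which relies on the dense-regime assumption $\mone > cn\log n$ together with the fact that the contraction of Lemma~\ref{lem:deg2} is weight-non-increasing and removes a net vertex each time, and (ii) verify that $T_j$, which lives in the repeatedly pruned and contracted graph $G_j$, yields a genuine connected cut of $G$ of the stated value. I would discharge (ii) by running Lemma~\ref{lem:alphatt} entirely within $G_j$ and then lifting the resulting connected vertex set back through the sequence of contractions and edge-deletions, using that contractions preserve the \bcmc\ value (Lemma~\ref{lem:deg2}) while re-inserting deleted edges can only enlarge the cut. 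The delicate point is ensuring that the pigeonhole weight $\sum_{v\in L_j}W_{G_j}(v)$, measured in $G_j$, is precisely the quantity Lemma~\ref{lem:alphatt} turns into cut value, so that nothing is lost in translation between the reduced instance $G_j$ and the original graph $G$ (in particular, the leaf-weight maximized when choosing $j$ must be the \emph{current-graph} weight $W_{G_i}$ rather than the original $\deg_G$).
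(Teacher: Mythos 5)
Your proposal is correct and follows essentially the same route as the paper's proof: the same split into sparse and dense regimes, the same counting of \textsf{1-edges} lost per round (deletion losses charged to $\sum_{v\in L_i}W_{G_i}(v)$, contraction losses bounded by $O(n)=O(\mone/\log n)$), and the same pigeonhole over $O(\log n)$ rounds to extract an $\Omega(\frac{1}{\log n})$-thick tree, with your telescoping argument being just a direct rephrasing of the paper's proof by contradiction. Your added bookkeeping --- running Lemma~\ref{lem:alphatt} inside $G_j$, lifting the solution back through contractions and edge deletions, and noting that the $\argmax$ should use the current-graph weight $W_{G_i}$ rather than $\deg_G$ as written in Algorithm~\ref{alg:logn} --- is a presentational tightening of details the paper glosses over, not a different approach.
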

\begin{proof}
  Let us assume that $\mone \leq c n \log n$ (for some constant $c$). Now,  Lemma \ref{lem:good-leaves} and Claim \ref{lem:only-good} together imply that $\sum_{v \in L_1} W_G(v) = \Omega(n)$.
Further, since we have $w(\delta(OPT)) \leq \mone \leq c n \log n$, $T$ is an $\alpha$-thick tree for some $\alpha = \Omega(\frac{1}{\log n})$. Hence, we obtain an $\Omega(\frac{1}{\log n})$ approximate solution using Lemma \ref{lem:alphatt}.

On the other hand, if $\mone > c n \log n$, we show that at least one of the trees $T_i$ obtained by the repeated applications of the Lemma \ref{lem:good-leaves} is an $\alpha$-thick tree $T$ of $G$ for $\alpha = \Omega(\frac{1}{\log n})$. 
 We first observe that the \textsf{While} loop in Step \ref{algm:line:alphatt} runs for at most $O(\log n)$ iterations. This is because we delete $\Omega(n_i)$ leaves in each iteration and hence after $k = O(\log n)$ iterations, we get $G_k = \phi$.
We now count the number of \textsf{1-edges} ``lost'' in each iteration. We recall that $W_G(v)$ is the total number of \textsf{1-edges} incident on $v$ in a graph $G$. In an iteration $i$, the number of \textsf{1-edges} lost at Step \ref{algm:line:delete_edges} is at most $\sum_{v \in L_i} W_{G_{i}}(v)$. In addition, we may lose a total of at most $2n \leq \frac{2\mone}{c \log n}$ edges due to the contraction of degree two vertices in Step \ref{algm:line:contract_edges}. 
Suppose for the sake of contradiction that $\sum_{v \in L_i} W_{G}(v) < \frac{\mone}{d \log n}, \forall 1 \leq i \leq k$ where $d$ is a suitable constant. Then the total number of \textsf{1-edges} lost in $k = O(\log n)$ iterations is at most
\[
\sum_{i = 1}^k (\sum_{v \in L_i} W_{G_{i}}(v)) + \frac{2\mone}{c \log n}
< \sum_{i = 1}^k \frac{\mone}{d \log n} + \frac{2\mone}{c \log n} = \frac{\mone}{\hat{d}} + \frac{\mone}{c \log n} < \mone \]
The equality follows for a suitable constant $\hat{d}$ as $k = O(\log n)$. The final inequality holds for a suitable choice of the constants $c$ and $d$.
But this is a contradiction since we have $G_k = \phi$.

Since we choose $j$ to be the best iteration, we have $\sum_{v \in L_j}W_{G}(v) \geq \frac{\mone}{d \log n}$ for some constant $d$. Hence the tree $T_j$ is an $\alpha$-thick tree of $G$ for $\alpha = \frac{1}{d\log n}$ and the theorem follows by Lemma \ref{lem:alphatt}.
\qed
\end{proof}

\subsection*{General Weighted Graphs}
\label{sec:weighted-graphs}
We now consider the \textsf{Weighted Connected Maximum Cut} (\wcmc) problem. Formally, we are given a graph $G=(V,E)$ and a weight function $w:E\rightarrow \mathbb{R}^+\cup\{0\}$. The goal is to find a subset $S$ of vertices that induces a connected subgraph and maximizes the quantity $\sum_{e\in \delta(S)} w(e)$.
 We obtain a $\Omega(\frac{1}{\log^2n})$ approximation algorithm for this problem. 
Our basic strategy is to group edges having nearly the same weight into a class and thus create $O(\log n)$ classes. We then solve the \bcmc\ problem for each class independently and return the best solution.

\begin{algorithm}[htbp]
 {\bf Input}: Connected graph $G = (V, E)$ with $|V| = n$ and $|E| = m$; Weight function, $w:E\rightarrow \mathbb{R}^+\cup{0}$ ; $\epsilon > 0 $\;
 {\bf Output:} A subset $S\subseteq V$, such that $G[S]$ is connected\;
Let $w_{max}$ be the maximum weight over any edge of the graph\;
Define, $w_0 = \frac{\epsilon w_{max}}{m}$ and $w_i = w_0(1+\epsilon)^i$, for $i\in [\log_{1+\epsilon} \frac{m}{\epsilon}]$\;
\For{$i\in [0, \log_{1+\epsilon} \frac{m}{\epsilon}]$}
{
\For{$e \in E$}{
\If{$ w_i \leq w(e) < w_{i+1}$}{
$w'_i(e) = 1$\;
}
\Else{
$w'_i(e)=0$\;
}
Using Theorem~\ref{thm:logn}, solve for the connected subset $S_i$\;
}
}
\Return $S_{best}$, such that  $best = \displaystyle \argmax_{i\in[0,\log \frac{n}{\epsilon}]} \sum_{e\in \delta(S_i)}w(e)$\;
\caption{Algorithm for the \textsf{Weighted} \cmcfull\ problem.}
\label{alg:log2n}

\end{algorithm}

\begin{theorem}
Algorithm~\ref{alg:log2n} gives a $\Omega(\frac{1}{\log^2n})$ approximation guarantee for the \textsf{Weighted} \cmcfull\ problem. 
\end{theorem}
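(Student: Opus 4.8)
The plan is to show that the weight-class bucketing reduces the weighted problem to $O(\log n)$ instances of \bcmc, losing only one extra logarithmic factor, so that the $\Omega(1/\log n)$ guarantee of Theorem~\ref{thm:logn} compounds into the claimed $\Omega(1/\log^2 n)$ bound. Fix an optimal solution $OPT$ and let $\beta = w(\delta(OPT))$ be its value. The first fact I would establish is that $w_{max} \le \beta$: the single vertex incident on the heaviest edge is a (trivially connected) feasible solution whose cut already contains that edge. Consequently the threshold $w_0 = \epsilon w_{max}/m$ is negligible --- since there are at most $m$ edges, the total weight of all edges with $w(e) < w_0$ is at most $m\cdot w_0 = \epsilon w_{max} \le \epsilon\beta$. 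In particular, the edges of $\delta(OPT)$ of weight at least $w_0$ carry total weight at least $(1-\epsilon)\beta$.

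Next I would apply a pigeonhole argument over the weight classes. Every edge with $w(e) \ge w_0$ lies in exactly one class $i$, the one with $w_i \le w(e) < w_{i+1} = (1+\epsilon)w_i$, and the number of classes is $\log_{1+\epsilon}(m/\epsilon) + 1 = O\!\left(\frac{\log n}{\epsilon}\right)$, using $m \le \binom{n}{2}$ so that $\log m = O(\log n)$. Distributing the at-least-$(1-\epsilon)\beta$ worth of heavy cut edges of $OPT$ among these $O(\log n/\epsilon)$ classes, some class $i^{*}$ receives cut edges of total weight at least $\Omega\!\left(\frac{\epsilon\beta}{\log n}\right)$. Let $m^{*}$ be the number of class-$i^{*}$ edges in $\delta(OPT)$; since each weighs less than $w_{i^{*}+1} = (1+\epsilon)w_{i^{*}}$, we get $m^{*} \ge \Omega\!\left(\frac{\epsilon\beta}{(1+\epsilon)\,w_{i^{*}}\log n}\right)$.

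Finally I would pass this bound through the \bcmc\ solver. In the instance with the $0/1$ weights $w'_{i^{*}}$, the set $OPT$ is itself a feasible connected solution cutting exactly $m^{*}$ unit edges, so the \bcmc\ optimum on this instance is at least $m^{*}$; hence Theorem~\ref{thm:logn} returns a set $S_{i^{*}}$ cutting $\Omega(m^{*}/\log n)$ of the class-$i^{*}$ edges. Because all weights are nonnegative and every class-$i^{*}$ edge has weight at least $w_{i^{*}}$, the true weighted cut obeys $w(\delta(S_{i^{*}})) \ge w_{i^{*}}\cdot\Omega(m^{*}/\log n)$. Substituting the bound on $m^{*}$ makes the $w_{i^{*}}$ factors cancel and absorbs $(1+\epsilon)$ into the constant, giving $w(\delta(S_{i^{*}})) = \Omega\!\left(\frac{\epsilon\beta}{\log^2 n}\right)$. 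As the algorithm returns the best of the $S_i$, the output is at least this good, which for a constant $\epsilon$ is the claimed $\Omega(1/\log^2 n)$ approximation.

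The step I expect to be the main obstacle is the bookkeeping that makes the two sources of loss compose cleanly. I must choose the threshold $w_0$ small enough --- relative to $\beta$, via $w_{max}\le\beta$ --- that discarding the sub-$w_0$ edges costs only an $\epsilon$ fraction of the optimum, while simultaneously choosing the geometric ratio $(1+\epsilon)$ coarse enough that there are only $O(\log n)$ classes (so the pigeonhole step loses a single log factor) yet fine enough that rounding every class edge to unit weight in $w'_{i^{*}}$ loses only the constant factor $(1+\epsilon)$. Confirming that these choices are mutually consistent, and in particular that the cancellation of the common scale $w_{i^{*}}$ between the solver's guaranteed output and the per-class weight of $OPT$ genuinely leaves a clean $\Omega(1/\log^2 n)$ once the discarded light edges are accounted for, is the crux of the argument.
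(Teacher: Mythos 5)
Your proposal is correct and follows essentially the same route as the paper's proof: bound $w_{max}$ by the optimum to discard sub-$w_0$ edges at an $\epsilon$ loss, bucket into $O(\log n)$ geometric classes, observe that $OPT$ is feasible for each $\{0,1\}$-instance so Theorem~\ref{thm:logn} loses only one more $\log n$ factor, and return the best $S_i$. The only cosmetic difference is that you pigeonhole onto a single best class $i^{*}$ while the paper sums its per-class inequality $\sum_{e\in OPT_i} w(e) = O((1+\epsilon)\log n \sum_{e\in \delta(S_i)} w(e))$ over all classes --- the same averaging argument in two equivalent forms.
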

\begin{proof}
Let $OPT$ be an optimal solution for a given instance of the problem and let $\psi = \sum_{e\in \delta(OPT)} w(e)$. Also, let $\epsilon\in  (0,1]$.
Since we have that $\psi \geq w_{max}$, we can reset the weights of those edges with weight $< \frac{\epsilon w_{max}}{m}$ to 0 and  assume that $w_{min} \geq \frac{\epsilon w_{max}}{m}$ where $w_{min}$ denotes the weight of the minimum (non zero) weight edge.
Let $E_i$ be the set of edges $e$ such that $w_i\leq w(e) < w_{i+1}$ and finally let $OPT_i = \delta(OPT)\cap E_i$. 
We now claim that $\sum_{e\in OPT_i} w(e) = O((1+\epsilon)\log n\sum_{e\in \delta(S_i)}w(e))$. This immediately gives us that $\sum_{e\in \delta(S_{best})}w(e) = \Omega(\frac{\sum_{e\in OPT}w(e)}{(1+\epsilon)\log n\log_{1+\epsilon}\frac{m}{\epsilon}}) = \Omega({\frac{1}{\log^2n})(\sum_{e\in OPT}w(e)})$. 

We now prove the claim. Consider solving the \bcmc\ instance with weight function $w_i'$. Clearly $OPT$ is a feasible solution to this instance and we have $\sum_{e\in \delta(OPT)}w'_i(e) = \sum_{e\in OPT_i} w'_i(e) \leq O(\log n \sum_{e\in \delta(S_i)}w'_i(e))$. The previous inequality holds as $S_i$ is guaranteed to be an $\Omega(\frac{1}{\log n})$-approximate solution by Theorem \ref{thm:logn}. Now, we have $\sum_{e\in OPT_i} w(e) \leq (1+\epsilon)w_i\sum_{e\in OPT_i} w'_i(e) \leq  O((1+\epsilon)w_i \log n\sum_{e\in \delta(S_i)} w'_i(e)) \leq O((1+\epsilon)\log n\sum_{e\in \delta(S_i)} w(e))$. Hence, the claim.    
\qed
\end{proof}

\section{CMC in Planar and Bounded Genus Graphs}
In this section, we consider the \cmc\ problem in planar graphs and more generally, in graphs with genus bounded by a constant. We show that the \cmc\ problem has a PTAS in bounded genus graphs.

\subsection*{PTAS for Bounded Genus Graphs.}
\label{sec:bounded-genus-graphs}
We use the following (paraphrased) contraction decomposition theorem by Demaine, Hajiaghayi and Kawarabayashi~\cite{demaine2011contraction}.

\begin{restatable}{thm}{demaine}\emph{(\cite{demaine2011contraction})}\label{thm:demaine} For a bounded-genus graph $G$ and an integer $k$, the edges of $G$ can be partitioned into $k$ color classes such that contracting all the edges in any color class leads to a graph with treewidth $O(k)$. Further, the color classes are obtained by a radial coloring and have the following property: 
If edge $e=(u,v)$ is in class $i$, then every edge $e'$ such that $e' \cap e \neq \phi$ is in class $i-1$ or $i$ or $i+1$.
\end{restatable}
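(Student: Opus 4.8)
The plan is to read the coloring off a breadth-first layering of the \emph{radial graph} and to bound the treewidth after contraction through the linear local treewidth of bounded-genus surfaces. First I fix an embedding of $G$ in its surface $\Sigma$ of genus $g=O(1)$ and form the radial graph $R(G)$, the bipartite incidence graph whose nodes are the vertices and faces of $G$ and whose edges join each vertex to every face on whose boundary it lies. Running BFS in $R(G)$ from an arbitrary root assigns a radial level to every vertex and every face. I then give each edge $e=(u,v)$ of $G$ an integer level $\lambda(e)$ read off from the radial levels of its incident vertices and faces, with the rounding rule chosen precisely so that edges sharing an endpoint receive levels differing by at most one; this is possible because the two endpoints of an edge lie on a common face and hence are within bounded radial distance. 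Finally I color $e$ by $\lambda(e)\bmod k$, giving exactly $k$ classes.

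The locality property is the easy half and falls out of the construction. Two edges $e,e'$ that share a vertex $w$ have their $\lambda$-values both anchored to the radial level of $w$, so $|\lambda(e)-\lambda(e')|\le 1$; reducing modulo $k$, their classes differ by at most one, i.e. $e'$ lies in class $i-1$, $i$, or $i+1$ (indices taken modulo $k$). The only real care here is in fixing the rounding rule for $\lambda$ so that incident edges never jump by two levels, which is exactly what the shortest-path nature of radial BFS guarantees.

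The substance is the treewidth bound, where the plan is a \emph{slab decomposition}. Fix a class $i$ and contract all edges with $\lambda\equiv i\pmod k$. These contracted edges occupy the ``bands'' of levels congruent to $i$, and between two consecutive bands there are at most $k-1$ surviving levels. Hence the contracted graph is covered by slabs, each spanning a window of $O(k)$ consecutive radial levels with its two bounding bands collapsed. Arranging the slabs along the natural nesting order of the radial BFS and gluing their tree decompositions along the shared (collapsed) bands yields a global tree decomposition whose width is, up to an additive constant, the maximum width of a single slab. It therefore suffices to show that a region of $G$ of radial width $O(k)$ has treewidth $O(k)$.

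This final step is the main obstacle, and it is exactly where the genus enters. In the plane a radial-width-$r$ region is $r$-outerplanar and so has treewidth $O(r)$, but on a surface of genus $g$ the BFS levels need not form nested circles: handles create non-contractible cycles and the ``rings'' can interlock, so the planar concentric-ring argument fails outright. The remedy is the linear local treewidth of bounded-genus graphs (Eppstein; Demaine and Hajiaghayi): cutting $\Sigma$ open along $O(g)$ short non-separating curves reduces each slab to a planar instance where the ring argument applies, at the cost of only an additive $O(g)$ in treewidth. Threading this genus dependence through every slab so that each has treewidth $O(k+g)=O(k)$ for fixed $g$, and arranging the cutting curves to be compatible with the radial layering, is the delicate part; everything else is bookkeeping. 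Conceptually this is also why contraction behaves like deletion here: contracting a primal color class corresponds to deleting the dual color class, and the treewidths of a bounded-genus graph and its dual agree up to $O(g)$, so the same argument could equivalently be run as a Baker-style edge-deletion decomposition in the dual $G^{*}$.
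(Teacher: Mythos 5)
The paper does not actually prove this statement: it is quoted (in paraphrased form) from Demaine, Hajiaghayi and Kawarabayashi \cite{demaine2011contraction}, so there is no in-paper proof to compare against, only the cited one. Measured against that, your sketch reconstructs the right strategy: BFS in the radial (vertex--face incidence) graph, coloring edges by level modulo $k$, a slab decomposition between contracted bands, and bounded-genus local treewidth to bound each slab --- this is essentially the Klein-style planar argument plus the genus handling of \cite{demaine2011contraction}, and your closing remark that primal contraction is dual deletion (so the whole thing can be run as a Baker-type deletion decomposition in $G^*$, transferring treewidth across duality) is exactly the cleanest rigorous route.

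Two soft spots deserve flagging. First, the primal gluing step is not mere bookkeeping: consecutive slabs share an entire collapsed band, which contracts to one vertex \emph{per connected component} of the band subgraph, and that interface can be large; gluing tree decompositions along a large shared vertex set either requires the set to sit in a common bag or costs additive width proportional to its size, so the width of the union is not ``the maximum width of a single slab'' without further argument. The dual-deletion route sidesteps this entirely, since deleting a dual color class genuinely disconnects the dual into slabs and treewidth of a graph is the maximum over components; the price is the surface-duality comparison $\mathrm{tw}(G)\leq \mathrm{tw}(G^*)+O(g)$ (Mazoit), which you should invoke explicitly. Second, your ``additive $O(g)$'' claim for cutting along noncontractible curves is not right as stated --- cutting curves can be long, and the known local-treewidth bound for genus-$g$ graphs is multiplicative, $O(g\cdot d)$ (Eppstein), giving slabs of treewidth $O(gk)$; this is harmless for the theorem since $g=O(1)$, but the intermediate claim $O(k+g)$ should be corrected. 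Finally, note that the $\pm 1$ locality property needs a genuinely careful level rule: adjacent vertices of $G$ sit at radial distance exactly $2$, so naive per-edge levels read off endpoints or faces can differ by $2$ for edges sharing a vertex; one must group radial levels into bands of width at least two (or the equivalent rounding) before reducing modulo $k$, which is achievable and is what the cited construction does, but it is not automatic from ``shortest-path nature of BFS'' alone.
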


Given a graph $G$ of constant genus, we use Theorem \ref{thm:demaine} appropriately to obtain a graph $H$ with constant treewidth. In Appendix \ref{sec:dptw}, we show that one can solve the \cmc\ problem optimally in polynomial time on graphs with constant treewidth.

\begin{restatable}{thm}{ptas}
If the \cmc\ problem can be solved optimally on graphs of constant treewidth, then there exists a polynomial time $(1-\epsilon)$ approximation algorithm for the \cmc\ problem on bounded genus graphs (and hence on planar graphs).  
\end{restatable}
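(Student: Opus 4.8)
The plan is to instantiate the standard contraction-decomposition (Baker-style) PTAS framework, using Theorem~\ref{thm:demaine} as the decomposition and the assumed exact algorithm on constant-treewidth graphs as the black box. First I would fix the target accuracy, set $k = \lceil 3/\epsilon \rceil$, apply Theorem~\ref{thm:demaine} to partition $E$ into color classes $E_1,\dots,E_k$, and for each $i$ form the contracted graph $H_i = G/E_i$, which has treewidth $O(k) = O(1/\epsilon)$, i.e. a constant for fixed $\epsilon$. For each $i$ I would run the assumed exact \cmc\ algorithm on $H_i$, lift the resulting connected set of supervertices back to $G$ by taking the union of the original vertices inside the chosen supervertices, and output the best solution over all $i$. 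Since we solve $k = O(1/\epsilon)$ constant-treewidth instances, the whole procedure runs in polynomial time.

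The lifting is clean in the forward direction. Each supervertex is a connected component of $(V,E_i)$, hence induces a connected subgraph of $G$ through $E_i$-edges; therefore the union of supervertices selected by a connected solution in $H_i$ induces a connected subgraph of $G$. Moreover, every edge of $E_i$ has both endpoints inside a single supervertex, so $E_i$-edges never cross the lifted set, and the remaining (non-$E_i$) edges cross in $G$ exactly when their images cross in $H_i$. Consequently the lifted set is feasible and $\delta_G(\cdot)$ equals $\delta_{H_i}(\cdot)$ of the computed solution. Thus the output cut value equals $OPT_{H_i}$, the optimum of \cmc\ on $H_i$, for the best $i$, and it suffices to exhibit one class $i$ with $OPT_{H_i} \ge (1-\epsilon)\,OPT$, where $OPT = \delta_G(S^*)$ is the value of a fixed optimal set $S^*$ in $G$.

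To produce such a class I would project $S^*$ into each $H_i$. Since contraction is a graph quotient, the image $\bar S_i$ of the connected set $S^*$ (the set of supervertices meeting $S^*$) is automatically connected in $H_i$, so no connectivity surgery is needed. I then bound the number $\mathrm{lost}_i$ of edges of $\delta_G(S^*)$ that fail to cross $\bar S_i$ in $H_i$, giving $\delta_{H_i}(\bar S_i) \ge OPT - \mathrm{lost}_i$. An edge $(u,v)\in\delta_G(S^*)$ is lost only if it lies in $E_i$ (contracted away) or its endpoint outside $S^*$ lies in a supervertex that also meets $S^*$; in the latter case that endpoint is incident to an $E_i$-edge, so by the locality property of Theorem~\ref{thm:demaine} the lost edge lies in $E_{i-1}\cup E_i\cup E_{i+1}$. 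Hence every edge of $\delta_G(S^*)$ is charged as lost for at most three classes, yielding $\sum_{i=1}^{k}\mathrm{lost}_i \le 3\,OPT$. By averaging some class $i$ satisfies $\mathrm{lost}_i \le 3\,OPT/k \le \epsilon\,OPT$, whence $OPT_{H_i} \ge \delta_{H_i}(\bar S_i) \ge (1-\epsilon)\,OPT$.

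The main obstacle is exactly this loss analysis: under plain contraction a single contracted class could, a priori, destroy the cut contribution of arbitrarily many edges incident to large supervertices. The locality guarantee, that edges adjacent to a class-$i$ edge live only in classes $i-1,i,i+1$, is what confines the damage of contracting $E_i$ to neighboring classes and makes the charging telescope to $O(OPT)$ rather than $O(k\cdot OPT)$; this is precisely why the strengthened decomposition theorem is needed. Two routine points remain to dispatch: treating each $H_i$ as a multigraph so that cut counts are preserved exactly (parallel edges and self-loops contribute consistently on both sides), and observing that the exact \cmc\ algorithm is invoked only on graphs whose treewidth is the constant $O(1/\epsilon)$.
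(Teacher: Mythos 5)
Your proof is correct and follows essentially the same route as the paper: the same contraction decomposition of Theorem~\ref{thm:demaine}, the same projection of the optimal set (connected since edges are only contracted), and the same use of the locality property to confine lost cut edges to classes $i-1,i,i+1$. The only difference is bookkeeping --- the paper groups the $3/\epsilon$ classes into $1/\epsilon$ consecutive triples and contracts the middle class of the triple meeting $\delta(S^*)$ least, while you contract each class separately and charge each lost edge to at most three classes before averaging --- which yields the identical $\epsilon\,OPT$ loss bound.
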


\begin{proof}
Let $G = (V,E)$ be the graph of genus bounded by a constant and let $\opt$ denote the optimal \cmc\ of $G$ and $\psi = |\delta(\opt)|$ be its size.
Using Theorem \ref{thm:demaine} with $k = \frac{3}{\epsilon}$, we obtain a partition of the edges $E$ into $\frac{3}{\epsilon}$ color classes namely $C_1, C_2, \ldots, C_{\frac{3}{\epsilon}}$. We further group three consecutive color classes into $\frac{1}{\epsilon}$ groups $G_1, \ldots, G_{\frac{1}{\epsilon}}$ where $G_j = C_{3j-2} \cup C_{3j-1} \cup C_{3j}$. Let $G_{j^*}$ denote the group that intersects the least with the optimal connected max cut of $G$, i.e., $j^* = \argmin_j(|G_j \cap \delta(\opt)|)$\footnote{We ``guess'' $j^*$ by trying out all the $\frac{1}{\epsilon}$ possibilities}. As the $\frac{1}{\epsilon}$ groups partition the edges, we have $|G_{j^*} \cap \delta(\opt)| \leq \epsilon \psi$. Let $i = 3j^* - 1$, so that $G_{j^*} = C_{i-1} \cup C_i \cup C_{i+1}$. Let $H = (V_H, E_H)$ denote the graph of treewidth $O(\frac{1}{\epsilon})$ obtained by contracting all edges of color $C_i$.

We first show that $H$ has a \cmc\ of size at least $(1-\epsilon) \psi$. For a vertex $v \in V_H$, let $\mu(v) \subseteq V$ denote the set of vertices of $G$ that have merged together to form $v$ due to the contraction. We define a subset $S' \subset V_H$ as  $S' = \{v \in V_H \ |\ \mu(v) \cap \opt \neq \phi\}$.
Note that because we contract edges (and not delete them), $S'$ remains connected. We claim that $|\delta(S')| \geq (1 - \epsilon) \psi$. 
Let $e = (u,v)$ be an edge in $\delta(\opt)$. Now $e \notin \delta(S')$ implies that at least one edge $e'$ such that $e' \cap e \neq \phi$ has been contracted. By the property guaranteed by Theorem \ref{thm:demaine}, we have that $e \in G_{j^*}$. Hence we have, $|\delta(S')| \geq |\delta(S) \setminus G_{j^*}| = |\delta(S)| - |G_{j^*} \cap \delta(S)| \geq (1-\epsilon) \psi$.

Finally, given a connected max cut of size $\psi$ in $H$, we can recover a connected max cut of size at least $\psi$ in $G$ by simply un-contracting all the contracted edges. Hence, by solving the \cmc\ problem on $H$ optimally, we obtain a $(1-\epsilon)$ approximate solution in $G$.
\qed
  \end{proof}
  
\subsection*{NP-hardness in planar graphs}
\label{sec:np-hardness}
We now describe a non-trivial polynomial time reduction of a \textsf{3-SAT} variant known as \textsf{Planar Monotone 3-SAT} (\pmsat) to the \cmc\ problem on a planar graph, thereby proving that the latter is NP-hard. The following reduction is interesting as the classical \mc\ problem can be solved optimally in polynomial time on planar graphs using duality. In fact, it was earlier claimed that  even \cmc\ can be solved similarly~\cite{haglin1991approximation}.

An instance of \pmsat\ is a 3-CNF boolean formula $\phi$ such that -
\vspace{-3mm}
\begin{enumerate}
\item[a)] A clause contains either all positive literals or all negative literals.
\item[b)] The associated bipartite graph $G_{\phi}$\footnote{$G_{\phi}$ has a vertex for each clause and each variable and an edge between a clause and the variables that it contains} is planar.
\item[c)] Furthermore, $G_\phi$ has monotone, rectilinear representation. We refer the reader to Berg and Khosravi~\cite{de2008finding} for a complete description.
Figure~\ref{rectilinear} illustrates the rectilinear representation by a simple example.  
\end{enumerate}

\begin{figure}[htb]
        \centering
        \begin{subfigure}[b]{0.45\textwidth}
                \centering
                \includegraphics[width=\textwidth]{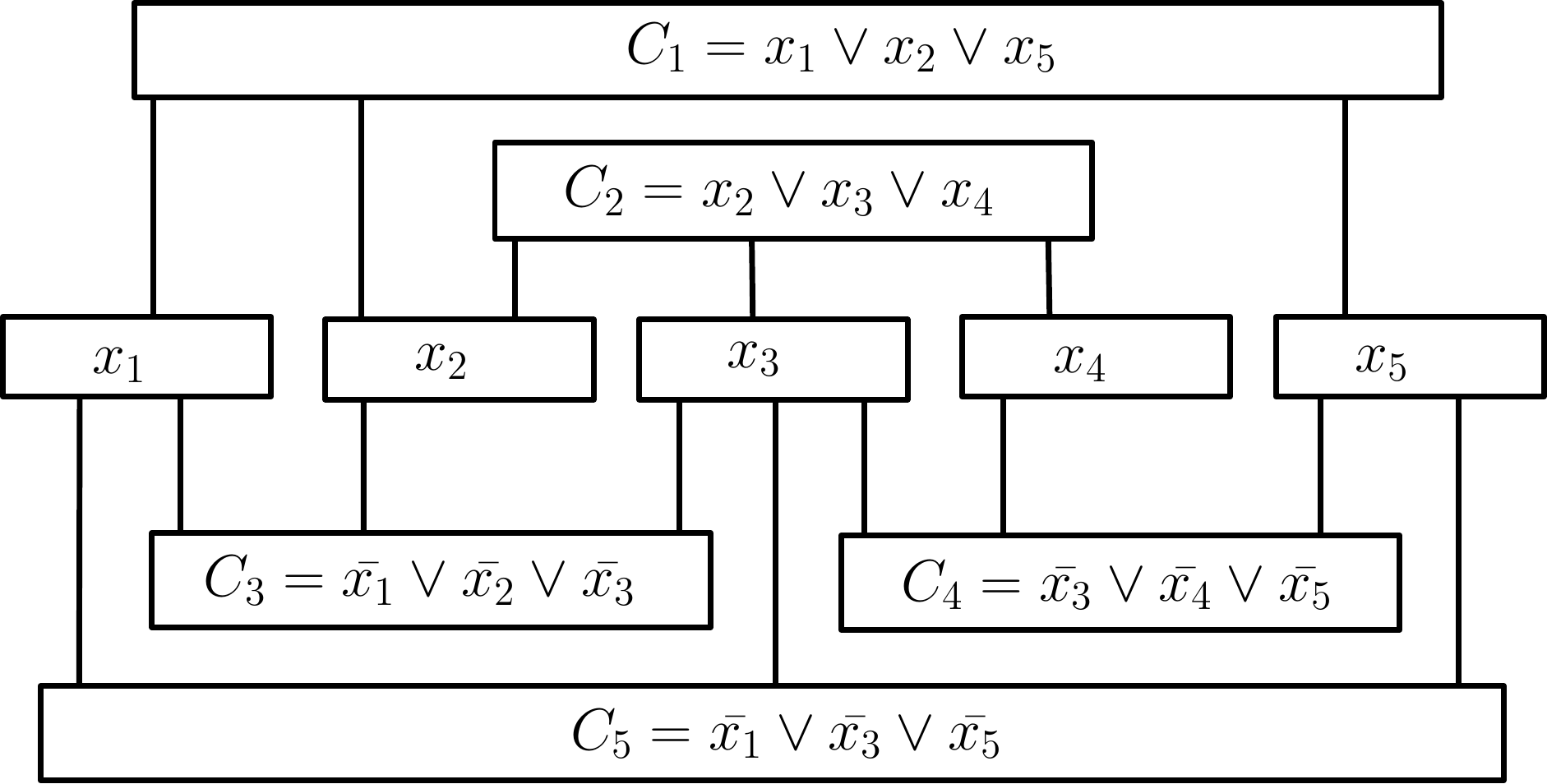}
                \caption{Monotone Rectilinear Representation}
                \label{rectilinear}
         \end{subfigure}
	\quad\quad
        \begin{subfigure}[b]{0.45\textwidth}
                \centering
                \includegraphics[width=\textwidth]{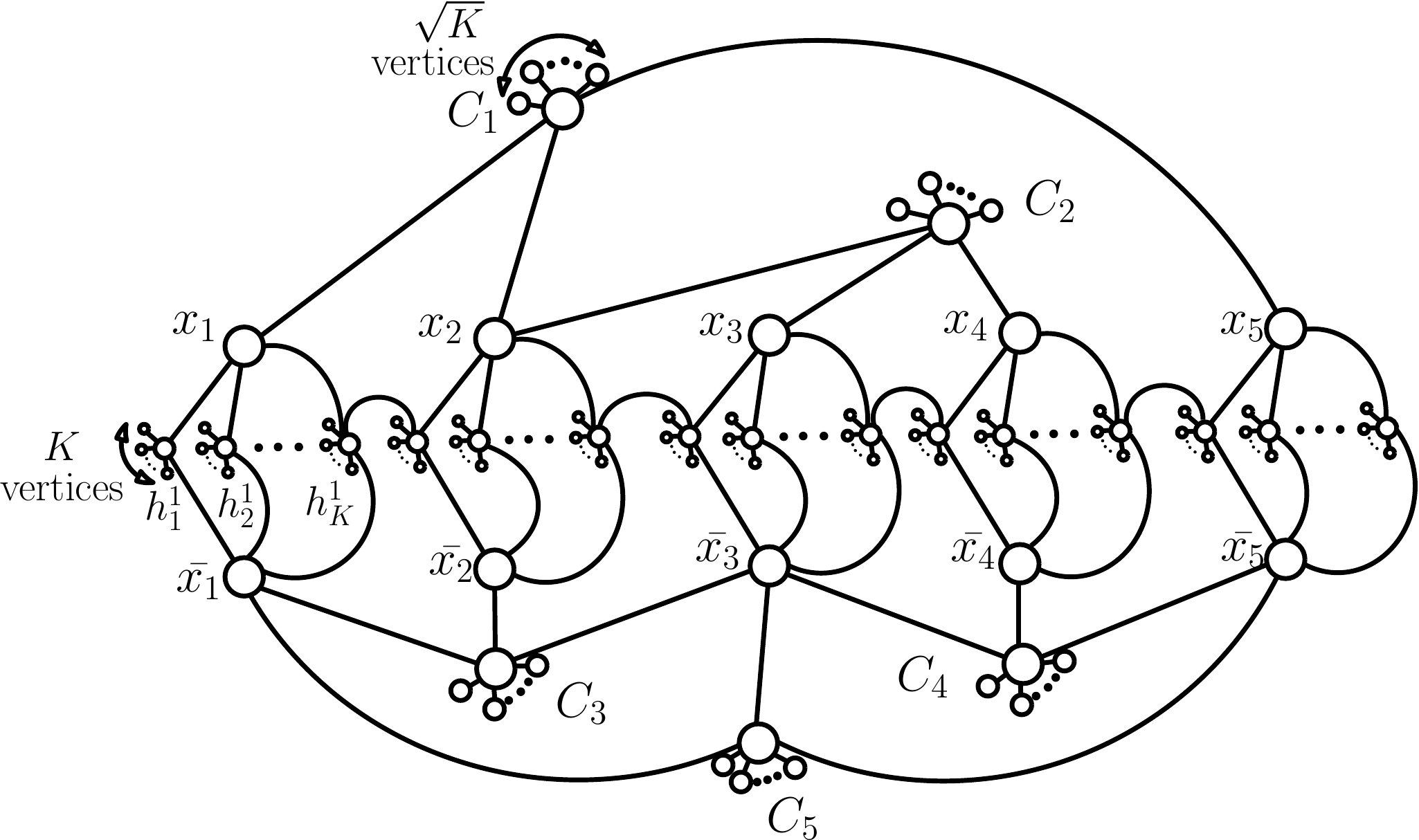}
                \caption{Reduction of PM-3SAT to a \textsf{Planar } \cmc\ instance}
                \label{reduction}
        \end{subfigure}
        \caption{Example illustrating the rectilinear representation and the reduction to a \textsf{Planar }\cmc\ instance of the formula $(x_1\vee x_2 \vee x_5) \wedge (x_2\vee x_3 \vee x_4) \wedge (\bar{x_1}\vee \bar{x_2}\vee \bar{x_3})\wedge (\bar{x_3}\vee \bar{x_4}\vee \bar{x_5}) \wedge (\bar{x_1}\vee \bar{x_3}\vee \bar{x_5}). $}\label{figall}
\end{figure}

Given such an instance, the \pmsat\ problem is to decide whether the boolean formula is satisfiable or not. Berg and Khosravi~\cite{de2008finding} show that the \pmsat\ problem is NP-complete.\\

\noindent {\bf The Reduction.}  Given a \pmsat\ formula $\phi$, with a rectilinear representation, we obtain a polynomial time reduction to a \textsf{Planar }\cmc\ instance, there by showing that the latter is NP-hard. 
Let $\{x_i\}_{i=1}^n$ denote the variables of the \pmsat\ instance and $\{C_j\}_{j=1}^m$ denote the clauses. We construct a planar graph $H_{\phi}$ as follows. For every variable $x_i$, we construct the following gadget:  We create two vertices $v(x_i)$ and $v(\bar{x_i})$ corresponding to the literals $x_i$ and $\bar{x_i}$.
   Additionally, we have $K > m^2 $  ``helper" vertices, $h^i_1, h^i_2, \ldots, h^i_K$ such that each $h^i_k$ is adjacent to both $x_i$ and $\bar{x_i}$.
    Further, for every $h^i_k$ we add a set $L^i_k$ of $K$ new vertices that are adjacent only to $h^i_k$.
    Now, in the rectilinear representation of the \pmsat, we replace each variable rectangle by the above gadget. 
    For two adjacent variable rectangles in the rectilinear representation, say $x_i$ and $x_{i+1}$, we connect the helpers $h^i_K$ and $h^{i+1}_1$. 
    For every clause $C_j$, $H_{\phi}$ has a corresponding vertex $v(C_j)$ with edges to the three literals in the clause. 
    Finally, for each vertex $v(C_j)$, we add a set $L_j$ of $\sqrt{K}$ new vertices adjacent only to $v(C_j)$.
It is easy to observe that the reduction maintains the planarity of the graph. Figure~\ref{reduction} illustrates the reduction by an example. 

We show the following theorem that proves the \textsf{Planar} \cmcfull\ problem is NP-hard. 

\begin{restatable}{thm}{nphard}
  Let $H_{\phi}$ denote an instance of the planar \cmc\ problem corresponding to an instance $\phi$ of \pmsat\ obtained as per the reduction above. Then, the formula $\phi$ is satisfiable if and only if there is a solution $S$ to the \cmc\ problem on $H_{\phi}$ with $|\delta_{H_\phi}(S)| \geq m\sqrt{K} + nK+ nK^2$.
\end{restatable}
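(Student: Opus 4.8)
The plan is to prove both directions by exhibiting one explicit ``ideal'' solution and then showing that the large value of $K$ forces any solution meeting the threshold to essentially coincide with it.

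\smallskip\noindent\emph{Forward direction (satisfiable $\Rightarrow$ large cut).} Suppose $\phi$ is satisfiable and fix a satisfying assignment. I would take $S$ to consist of all $nK$ helper vertices $h^i_k$, all $m$ clause vertices $v(C_j)$, and, for each variable $x_i$, exactly one literal vertex ($v(x_i)$ if $x_i$ is true, $v(\bar{x_i})$ otherwise); no leaf vertex is included. Connectivity is checked directly: inside each variable gadget the chosen literal vertex joins all of that variable's helpers, the edges $h^i_K$-$h^{i+1}_1$ splice consecutive gadgets into one backbone, and each $v(C_j)$ attaches to the backbone through the literal vertex of a literal satisfying the clause (which exists precisely because the clause is satisfied). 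Counting the cut, each of the $nK$ helpers contributes its $K$ leaf edges ($nK^2$ total), each of the $m$ clause vertices contributes its $\sqrt{K}$ leaf edges ($m\sqrt{K}$ total), and each helper has exactly one of its two literal edges cut since exactly one literal vertex of its variable lies in $S$ ($nK$ total); the chain edges are uncut. Hence $|\delta_{H_\phi}(S)| \ge m\sqrt{K} + nK + nK^2$.

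\smallskip\noindent\emph{Reverse setup and the key bound.} Now let $S$ be connected with $|\delta_{H_\phi}(S)| \ge m\sqrt{K}+nK+nK^2$. First I would argue we may assume $S$ contains no leaf vertex: a degree-one leaf in $S$ forces its unique neighbour into $S$, so deleting it keeps $S$ connected and turns that edge from uncut to cut, strictly increasing the objective. With no leaves, the leaf edges contribute exactly $Kh + \sqrt{K}\,c$, where $h$ is the number of helpers and $c$ the number of clause vertices in $S$. The technical core is a per-variable bound: writing $H$ for the number of cut literal-helper edges, a short case analysis over whether zero, one, or two literal vertices of a variable lie in $S$ shows that connectivity caps $Kh+H$ at $nK^2+nK$, with the maximum attained only when every variable keeps all $K$ of its helpers in $S$ and exactly one of its two literal vertices in $S$. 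The subtle case is when neither literal vertex of a variable is in $S$: then every interior helper of that variable is isolated, so at most its two chain-endpoint helpers can survive, making that variable's contribution negligible.

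\smallskip\noindent\emph{Forcing the structure and concluding.} Combining this with the crude estimate that the clause-literal and chain edges number at most $3m+(n-1)$ gives $|\delta_{H_\phi}(S)| \le (Kh+H) + \sqrt{K}\,c + 3m + n -1 \le nK^2+nK+\sqrt{K}\,c + 3m+n-1$. Comparing with the threshold yields $(m-c)\sqrt{K} \le 3m+n-1$; since $K$ is a sufficiently large polynomial (with every variable appearing in some clause we have $n=O(m)$, so the bound $K>m^2$ makes $\sqrt{K}$ exceed $3m+n-1$), this forces $c=m$, i.e.\ every clause vertex lies in $S$. Substituting $c=m$ back shows $Kh+H$ lies within $3m+n-1<K$ of its maximum $nK^2+nK$; since the case analysis shows that any variable deviating from the ``all helpers in, exactly one literal in'' configuration loses at least $K$ from $Kh+H$, no variable can deviate. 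Thus $S$ induces a genuine assignment (set $x_i$ true iff $v(x_i)\in S$), all clause vertices lie in $S$, and each such vertex—whose only possible backbone neighbours are its three literal vertices—must have a literal vertex in $S$, which is exactly the statement that every clause is satisfied. Hence $\phi$ is satisfiable.

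\smallskip\noindent\emph{Main obstacle.} I expect the difficulty to lie entirely in the reverse direction, specifically the quantitative forcing: one must show that beating the threshold is impossible unless $S$ matches the ideal solution almost exactly. This rests on the gap between the dominant leaf terms (multiples of $K^2$, $K$, and $\sqrt{K}$) and the lower-order clause-literal and chain contributions, so the entire reduction hinges on choosing $K$ as a large enough polynomial. The delicate step is the per-variable inequality $Kh_i+H_i \le K^2+K$ together with the ``each deviation costs at least $K$'' accounting, which is what converts a purely numerical cut guarantee into the combinatorial content of a satisfying assignment.
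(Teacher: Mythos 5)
Your proposal is correct, and while your forward direction matches the paper's essentially verbatim (same set $S$ --- all helpers, all clause vertices, one true literal per variable --- and the same count $nK^2 + nK + m\sqrt{K}$), your reverse direction takes a genuinely different route. The paper first replaces $S$ by an \emph{optimal} solution and then runs four local-exchange arguments against optimality: at least one literal vertex per variable must be present (else the loss at the $K^2$ scale is fatal), every helper must be present (adding a missing one gains $K-2$), both literals of a variable cannot be present (dropping one gains $K$ while losing at most $m\sqrt{K}+3m$), and finally all clause vertices must be present. You instead work directly with \emph{any} solution meeting the threshold: after the leaf-removal normalization (which, unlike the paper's setup, never appeals to global optimality --- it is a threshold-preserving local move), you decompose the cut exactly into leaf, literal--helper, clause--literal, and chain edges, prove the per-variable bound $Kh_i + H_i \le K^2 + K$ with equality only in the ``all $K$ helpers in, exactly one literal in'' configuration (your three cases check out: both literals give at most $h_i(K-2)+2K \le K^2$, one literal gives $Kh_i + K$, no literal strands all interior helpers), and let the gaps $\sqrt{K} > 3m+n-1$ and $K > 3m+n-1$ force $c = m$ and zero per-variable deviations. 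What your version buys is a self-contained quantitative argument that extracts the combinatorial structure from the threshold inequality itself; what the paper's version buys is shorter case analyses, at the price of threading the optimality assumption through every step. One shared caveat worth noting: both proofs' arithmetic (your $\sqrt{K} > 3m+n-1$, the paper's ``$m\sqrt{K}+3m < K$'') really requires $K$ a constant factor beyond $m^2$ (e.g., $K = m^4$ is safe), a point you correctly flag as ``a sufficiently large polynomial'' and which the paper's bare ``$K > m^2$'' glosses over.
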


\begin{proof}
For brevity, we denote $\delta_{H_\phi}(S)$ as $\delta(S)$ in the rest of the proof. 

{\bf Forward direction.}
Assume that $\phi$ is satisfiable under an assignment $A$. We now show that we can construct a set $S$ with the required properties.
Let $\{l_i\}_{i\in [n]}$ be the set of literals that are true in $A$. 
We define $S = \{v(l_i)\}_{i \in [n]} \cup \{C_j\}_{j \in [m]} \cup \{h^i_k\}_{i \in [n], k \in [K]}$, i.e., the set of vertices corresponding to the true literals, all the clauses and all the helper vertices. By construction, the set of all helper vertices and one literal of each variable induces a connected subgraph. Further, since in a satisfying assignment every clause has at least one true literal, the constructed set $S$ is connected. We now show that $|\delta(S)|\geq m\sqrt{K} + nK +nK^2$.
Indeed, $\delta(S)$ contains all the edges corresponding to the one degree vertices incident on clauses and all the helpers. This contributes a profit of $m\sqrt{K} + nK^2$. Also, since no vertex corresponding to a false literal is included in $S$ but all helpers are in $S$, we get an additional profit of $K$ for each variable. Hence, we have the claim. 

{\bf Reverse direction.} Assume that $S$ is a subset of vertices in $H_\phi$ such that $H_\phi[S]$ is connected and $|\delta(S)| \geq m\sqrt{K} + nK+ nK^2$. We now show that $\phi$ is satisfiable. We may assume that $S$ is an optimal solution (since optimal solution will satisfy these properties, if a sub-optimal solution does). We first observe that at least one of the (two) literals for each variable must be chosen into $S$. Indeed, if this is not the case for some variable, for $H_\phi[S]$ to be remain connected, none of the helper vertices corresponding to that variable can be chosen. This implies that the maximum possible value for $|\delta(S)| \leq (n-1)K^2 + m\sqrt{K} + 3m + 2(n-1)K$ (this is the number of remaining edges)$< nK^2$ (since $K> m^2$) $ < m\sqrt{K} + nK + nK^2$, a contradiction. We now show that every helper vertex must be included in $S$. Assume that this is not true and let $h^i_k$ be some helper vertex not added to $S$. We note that none of the $K$ degree one vertices in $L^i_k$
 can be in $S$ because $H_\phi[S]$ must be connected. Now, consider the solution $S'$ formed by adding $h^i_k$ to $S$. Since at least one vertices $v(x_i)$ or $v(\bar{x_i})$ is in $S$, if $H_\phi[S]$ is connected, so is $H_\phi[S']$. Further, the total number of edges in the cut increases by $K-2$. This is a contradiction to the fact that $S$ is an optimal solution. Hence, every helper vertex $h^i_k$ belongs to the solution $S$. We now show that, no two literals of the same variable are chosen into $S$. Assume the contrary and let $v(x_i)$, $v(\bar{x_i})$ both be chosen into $S$. We claim that removing one of these two literals will strictly improve the solution. Indeed, consider removing $v(x_i)$ from $S$. Clearly, we gain all the edges from $v(x_i)$ to all the helper vertices corresponding to this variable. Thus we gain at least $K$ edges. We now bound the loss incurred. In the worst case, removing $v(x_i)$ from $S$ might force the removal of all the clause vertices due to the connectivity restriction. But this would lead to a loss of at most $m\sqrt{K} + 3m < K$. Hence, we arrive at a contradiction that $S$ is an optimal solution. Therefore, exactly one literal vertex corresponding to each variable is included in $S$. Finally, we observe that all the clauses must be included in $S$. Assume this is not true and that $m'< m$ clause vertices are in $S$. Now the total cut is $nK + nK^2 + m'\sqrt{K} < nK+nK^2+m\sqrt{K}$, which is again a contradiction. Now, the optimal solution $S$ gives a natural assignment to the \pmsat\ instance: a literal is set to \textsf{TRUE} if its corresponding vertex is included in $S$. Since, every clause vertex belongs to $S$, which in turn is connected, it must contain a \textsf{TRUE} literal and hence the assignment satisfies $\phi$. 
\qed
 \end{proof}

\bibliographystyle{plain}
\bibliography{bibfile}

\appendix
\section{Dynamic program for constant tree-width graphs}
\label{sec:dptw}

The notion of \textsf{tree decomposition} and \textsf{tree-width} was first introduced by Robertson and Seymour~\cite{robertson1984graph}. Given a graph $G=(V,E)$, its tree decomposition is a tree representation $T = (\mathcal{B}, \mathcal{E})$, where each $b\in \mathcal{B}$ (called as a \emph{bag}) is associated with a subset $B_b\subseteq V$ such that the following properties hold:
\begin{enumerate}
\item $\bigcup_{b\in B} B_b = V$. 
\item For every edge $u,v\in E$, there is a bag $b\in \mathcal{B}$, such that $u,v\in B_b$.
\item For every $u\in V$, the subgraph $T_u$ of $T$, induced by bags that contain $u$, is connected. 
\end{enumerate}
The width of a decomposition is defined as the size of the largest bag $b \in B$ minus one.
Treewidth of a graph is the minimum width over all the possible tree decompositions. 
In this section, we show that the \cmc\ problem can be solved optimally in polynomial time on graphs with constant treewidth $t$. 

\noindent {\bf Notation.} We denote the tree decomposition of a graph $G=(V,E)$ by $\mathcal{T} = (\mathcal{B}, \mathcal{E})$. For a given bag of the decomposition $b\in \mathcal{B}$, let $B_b$ denote the set of vertices of $G$ contained in $b$ and $V_b$ denote the set of vertices in the subtree of $\mathcal{T}$ rooted at $b$.
 As shown by Kloks~\cite{kloks1994treewidth}, we may assume that $\mathcal{T}$ is \textsf{nice tree decomposition}, that has the following additional properties.
\begin{enumerate}
\item Any node of the tree has at most $2$ children.
\item A node $b$ with no children is called a \textsf{leaf} node and has $|B_b| = 1$.
\item A node $b$ with two children $c_1$ and $c_2$ is called a \textsf{join} node. For such a node, we have $B_b = B_{c_1} = B_{c_2}$.
\item A node $b$ with exactly one child $c$  is either a \textsf{forget} node or an \textsf{introduce} node. If $b$ is a \textsf{forget} node
then $B_b = B_c\setminus \{v\}$ for some $v\in B_c$. One the other hand, if $b$ is an \textsf{introduce} node then $B_b = B_c\cup \{v\}$ for $v\notin B_c$.
\end{enumerate}

We now describe a dynamic program to obtain the optimal solution for the \cmc\ problem. Let $OPT$ denote the optimal solution. We first prove the following simple claim that helps us define the dynamic program variable.\\

\begin{claim}
For any bag $b\in \mathcal{B}$, the number of components induced by $OPT\cap V_b$ in $G$ is at most $t$.
\end{claim}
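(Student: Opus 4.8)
The plan is to use the standard fact that every bag $B_b$ of a tree decomposition is a \emph{separator} of $G$, and then to exploit the global connectivity of $OPT$ to charge each component of $G[OPT\cap V_b]$ to a distinct vertex of that separator. First I would record the separator property: there is no edge of $G$ joining a vertex of $V_b\setminus B_b$ to a vertex of $V\setminus V_b$. Indeed, suppose $(x,y)\in E$ with $x\in V_b\setminus B_b$ and $y\in V\setminus V_b$. The edge-coverage axiom places $x$ and $y$ together in some bag $b'$; by the connectedness axiom the bags containing $x$ form a connected subtree $S_x$, and since $x\notin B_b$ but $x\in V_b$, this $S_x$ meets the subtree rooted at $b$ yet avoids $b$ itself, so $S_x$ lies strictly inside that subtree. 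Hence $b'$ is in the subtree of $b$, forcing $y\in B_{b'}\subseteq V_b$, a contradiction. Thus $B_b$ separates $V_b\setminus B_b$ from $V\setminus V_b$.

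Next I would analyze the components $C_1,\dots,C_p$ of $G[OPT\cap V_b]$. If $OPT\subseteq V_b$, then $G[OPT\cap V_b]=G[OPT]$ is connected, so $p=1$ and we are done; otherwise fix some $w\in OPT\setminus V_b$. For a component $C_i$, pick $v\in C_i$; since $G[OPT]$ is connected there is a $v$--$w$ path $P=(u_0,\dots,u_k)$ inside $G[OPT]$ with $u_0=v\in V_b$ and $u_k=w\notin V_b$. Let $j$ be the first index with $u_j\notin V_b$ and set $u=u_{j-1}$. Then $(u,u_j)$ is an edge with $u\in V_b$ and $u_j\in V\setminus V_b$, so the separator property forces $u\in B_b$. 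The prefix $u_0,\dots,u_{j-1}$ lies entirely in $OPT\cap V_b$ and is connected, so $u$ belongs to the same component as $v$, i.e.\ $u\in C_i\cap B_b\cap OPT$. Hence \emph{every} component contains a vertex of $OPT\cap B_b$. The main obstacle is exactly this step: one must rule out ``hidden'' components that never touch $B_b$, and the point is that connectivity of $OPT$ forces each component to reach the outside vertex $w$, while the separator guarantees that the exit occurs at a bag vertex lying in the same component.

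Finally, since distinct components are vertex-disjoint, assigning each $C_i$ to one of its vertices in $OPT\cap B_b$ is injective, so $p\le |OPT\cap B_b|\le |B_b|$. As the decomposition has width $t$, every bag has at most $t+1$ vertices, which already bounds $p$ by a constant depending only on $t$ and suffices to make the subsequent dynamic program well defined (it must track only the connectivity pattern of $OPT$ on each bag). To obtain the stated bound of exactly $t$ rather than $t+1$, one observes that in the non-trivial case each representative must have a neighbour in $OPT\setminus V_b$; tracing the bag that realizes such an edge shows every representative also lies in the parent bag, so $p=t+1$ would force the parent bag to coincide with $B_b$, and unwinding the node types of a nice decomposition at that node rules this out. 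This last refinement is the only delicate accounting point; the separator charging of the previous paragraph is the heart of the claim.
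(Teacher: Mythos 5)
Your core argument is correct and is essentially the paper's own proof. The paper shows that every component $C$ of $G[OPT\cap V_b]$ meets $B_b$ by taking a single edge $(u,v)$ with $u\in C$ and $v\in OPT\setminus C$ (which exists by connectivity of $G[OPT]$), noting $v\notin V_b$, and deriving from the bag axioms that $u$ must lie in $B_b$, contradicting $C\cap B_b=\phi$; you make the same point constructively, by first isolating the separator property of $B_b$ and then tracing a path from $C$ to a fixed outside vertex $w\in OPT\setminus V_b$, taking the last path vertex inside $V_b$ as a representative in $C\cap B_b$. Both versions then count components injectively against the vertices of $B_b$; your path-tracing is a little longer but sound, including the edge cases ($OPT\subseteq V_b$ gives one component).

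The one genuine defect is your final refinement from $t+1$ down to $t$. The step ``$p=t+1$ would force the parent bag to coincide with $B_b$, and unwinding the node types of a nice decomposition at that node rules this out'' is false as stated: a \textsf{join} node has $B_b=B_{c_1}=B_{c_2}$, so a parent bag equal to $B_b$ is perfectly consistent with a nice tree decomposition, and to push the contradiction through you would have to recurse upward through an arbitrary chain of join nodes, across which components of $OPT\cap V_b$ from the two subtrees can merge — an analysis you do not give. Fortunately the refinement is unnecessary: under the paper's own definition (width $=$ largest bag size minus one), treewidth $t$ only guarantees $|B_b|\le t+1$, so the paper's assertion ``at most $|B_b|\le t$'' is itself an off-by-one slip (or an implicit reuse of $t$ for the bag-size bound), and the downstream dynamic program needs only that the number of components is bounded by a constant depending on $t$. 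Your honest bound $p\le |OPT\cap B_b|\le |B_b|\le t+1$ is exactly what is provable and exactly what is used; state that and delete the last paragraph.
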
 

\begin{proof}
Consider the induced subgraph $G[OPT\cap V_b]$ and let $C$ be one of its components. We observe that $C$ has at least one vertex in $B_b$, i.e., $C\cap B_b \neq \phi$. Assume this is not true and $C\cap B_b = \phi$. Now consider an edge $e =(u,v)$ such that $u\in C$ and $v\in OPT\setminus C$. Such an edge is guaranteed to exist owing to the connectivity of $G[OPT]$. By our assumptions, $v\notin V_b$. This implies there is some bag $b'$ not in the subtree of $\mathcal{T}$ rooted at $b$, that contains both $u$ and $v$. But this in turn implies $u\in B_b$, a contradiction to the assumption $C\cap B_b = \phi$. Now, since each vertex in $B_b$ belongs to at most one component, there can be at most $|B_b|\leq t$ components in $G[OPT\cap V_b]$. Hence, the claim. 
\end{proof}

\noindent For a given $b\in \mathcal{B}$, let $S_b = OPT\cap B_b$ be the set of vertices chosen by the optimal solution from  the bag $b$. Further, let $P_b = (C_1, C_2\ldots C_t)$ be a  partition, of size $t$, of the vertices in $S_b$, such that each non-empty $C_i$ (some of the $C_i$'s could possibly be empty) is a subset of a unique component of the subgraph induced by $V_b\cap OPT$. We now define the variable of the dynamic program $M_b(P_b,S_b)$ in the following way: Consider the subgraph induced by $V_b$ in $G$ and let $S$ be a subset of $V_b$ with maximum cut $\delta_{G[V_b]}(S)$, such that every $C_i\in P_b$  is completely contained in a distinct component of $G[S]$. We set $M_b(P_b, S_b) = |\delta_{G[V_b]}(S)|$.

 From this definition, it follows that the optimal solution can be obtained by computing $M_r(P_r = (S,\phi,\phi, \ldots, \phi), S)$, for every subset $S$ of $B_r$, where $r$ is the root bag of the tree $\mathcal{T}$ and picking the best possible solution. We now describe the dynamic program to compute the above variable $M_b(P_b, S_b)$ for a given bag $b$. 
 
\noindent {\bf Case 1:} \textsf{Node $b\in \mathcal{T}$ is a leaf node}. In this case, $B_b= V_b = \{v\}$, for some vertex $v$.
\begin{align}
M_b((\{v\}, \phi, \ldots \phi), \{v\}) &= 0 \nonumber \\
M_b((\phi, \phi, \ldots \phi),\phi) &= 0 \nonumber
\end{align}

\noindent {\bf Case 2:} \textsf{Node $b\in \mathcal{T}$ is an introduce node}. In this case, $b$ has exactly one child node $c$  and $B_b = B_c\cup \{v\}$, for some vertex $v$.
 Let $P_b = (C_1,C_2\ldots C_t)$ be some partition of $S_b\subseteq B_b$. 
We compute $M_b(P_b, S_b)$ as follows. 
\begin{align*}
\intertext{If $v\notin S_b$, then}
M_b(P_b,S_b) &= M_c(P_b, S_b) + |\delta_{G[B_b]}(v, S_b)| 
\intertext{If  $v\in C_i$ and $v$  is  adjacent  to some vertex in  $C_i\setminus \{v_i\}$ but not to any vertex in $C_j, j\neq i$, then}
M_b(P_b,S_b) &= M_c((C_1,C_2, \ldots, C_i\setminus \{v\}, \ldots, C_t),S_b\setminus\{v\}) + |\delta_{G[B_b]}(v,B_b\setminus S_b)| 
\intertext {In all other cases, we set}
M_b(P_b,S_b) &= -\infty
\end{align*}
We now argue about the correctness of this case. First assume that $v\notin S_b$, that is $v$ is not chosen into our solution. If $S \subseteq V_c$ is the set of vertices chosen into our solution so far, the total cut size increases by $|\delta_{G[V_c]}(v,S)|$, which we claim is equal to $|\delta_{G[B_b]}(v,S_b)|$. In other words, none of the edges incident on $v$ are adjacent to any vertices in $S\setminus S_b$. Suppose for the sake of contradiction that there exists a vertex $w\in S\setminus S_b$ be such that $\{v,w\}\in E$. This implies that there exists a bag $b'$, not in the subtree of $b$, that contains both $v$ and $w$. This in turn implies  that $w\in S_b$, which is a contradiction to the fact that $w\in S\setminus S_b$. Hence, in this case the total cut increases by $|\delta_{G[B_b]}(v,S_b)|$. 

Now assume that $v\in S_b$, more specifically, let $v\in C_i$, for some $i$. From the above argument there are no edges between $v$ and vertices in $S\setminus S_b$. Since  we must have all  vertices in $C_i$ in a single component of $G[V_b]$ and all edges incident on $v$ in $G[V_b]$ have the other end in $B_b$,  $v$ must have an edge to some vertex in $C_i\setminus \{v\}$. Further, since any $C_i$ and $C_j$ must belong to distinct components, they must not share any vertices. Thus, if $v$ either has no edges to $C_i\setminus \{v\}$ or has an edge to some $C_j$, there is no feasible solution and we assign $-\infty$ to this variable. On the other hand if both these conditions are satisfied, our solution is valid and the increase in the cut size is $|\delta_{G[V_b]}(v, B_b\setminus S_b)|$.    

\noindent{\bf Case 3:} \textsf{Node $b\in \mathcal{T}$ is a forget node}. In this case, again, $b$ has exactly one child node $c$ with $B_b= B_c \setminus \{v\}$.
Let $P_b = (C_1,C_2\ldots C_t)$. It is easy to see that:
\[
M_b(P_b, S_b) = 
\max \left\{ 
\begin{array} {ll}
M_c(P_b, S_b) &\\
M_c((C_1,C_2, \ldots, C_i\cup \{v\}, \ldots, C_t), S_b\cup\{v\}), &\text{}\forall i\in[t] 
\end{array} 
\right.
\]

\noindent {\bf Case 4:}\textsf{ Node $b\in \mathcal{T}$ is a join node}. In this case, $b$ has two children $c_1, c_2$ with $B_b = B_{c_1} = B_{c_2}$. 
Let $P_b = (C_1, C_2\ldots, C_t)$  be the partition of vertices in $S_b$ such that each $C_i$ belongs to a distinct component in $G[OPT\cap V_b]$. 
Similarly, define $P_{c_1} = (C_1^1, C_2^1 \ldots C_t^1)$ and $P_{c_2} = (C_1^2, C_2^2 \ldots C_t^2)$ as partitions of $S_{c_1} = S_b$ and $S_{c_2} = S_b$ respectively. Consider the construction of following auxiliary graph, that we refer to as ``merge graph'', denoted by $M$. 
 For every $C_i^1$ and $C_i^2$, we have a corresponding vertex $v(C_i^1)$ and $v(C_i^2)$ respectively in $M$. 
 Further if two sets $C_i^1$ and $C_j^2$ intersect, i.e., $C_i^1\cap C_j^2 \neq \phi$, then we add an edge between $v(C_i^1),v(C_j^2)$ in $M$. 
 It is easy to observe that for a given component $C$ of $M$, the union of all subsets of vertices corresponding to the vertices of $C$ must belong to the same component of $G[OPT\cap V_b]$. Thus in turn implies that there is a one to one correspondence between $C_i\in P_b$ and the components of $M$. 
 
For a given partition $P_b$ of $S_b$, we call two partitions $P_{c_1}$ and $P_{c_2}$ of $S_{c_1}$ and $S_{c_2}$ as ``valid'' if there is a one to one correspondence between $C_i\in P_b$ and the components of $M$, as described above. We now prove the following simple claim.\\ 

\begin{claim}
For any  $S\subseteq V_b$,
$\delta_{G[V_b]}(S, V_b\setminus S) = \delta_{G[V_{c_1}]}(S_1, V_{c_1}\setminus S_1) + \delta_{G[V_{c_2}]}(S_2, V_{c_2}\setminus S_2) - \delta_{G[V_b]}(S_b,B_b\setminus S_b)$, where $S_1 = S\cap V_{c_1}$ and $S_2 = S\cap V_{c_2}$
\end{claim}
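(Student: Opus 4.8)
The plan is to prove the identity edge-by-edge via inclusion--exclusion, after first recording two structural facts about a join node that follow directly from the tree-decomposition axioms. Throughout write $S_b = S \cap B_b$ (so that $S_b = S_1 \cap B_b = S_2 \cap B_b$), where $S_1 = S \cap V_{c_1}$ and $S_2 = S \cap V_{c_2}$ as in the statement. The two facts I need are: \emph{(i)} $V_b = V_{c_1} \cup V_{c_2}$ with $V_{c_1} \cap V_{c_2} = B_b$; and \emph{(ii)} there is no edge of $G$ joining $V_{c_1} \setminus B_b$ to $V_{c_2} \setminus B_b$, so that every edge of $G[V_b]$ has both endpoints in $V_{c_1}$ or both in $V_{c_2}$.

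For \emph{(i)}, the union is immediate since the subtree rooted at $b$ consists of the subtree at $c_1$, the subtree at $c_2$, and $b$ itself, and $B_b = B_{c_1} = B_{c_2}$. For the intersection, a vertex in both $V_{c_1}$ and $V_{c_2}$ appears in a bag of each subtree; since these subtrees meet only at $b$, the connectivity axiom (property 3 of the decomposition) forces it into $B_b$, while $B_b \subseteq V_{c_1} \cap V_{c_2}$ is clear. For \emph{(ii)}, if some edge had one endpoint $u \in V_{c_1} \setminus B_b$ and the other $w \in V_{c_2} \setminus B_b$, then property 2 would give a bag containing both $u$ and $w$; but by connectivity $u$ occurs only in bags of the $c_1$-subtree and $w$ only in bags of the $c_2$-subtree, so no common bag exists, a contradiction.

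By \emph{(ii)} the edge set of $G[V_b]$ equals $E(G[V_{c_1}]) \cup E(G[V_{c_2}])$, and by \emph{(i)} the edges lying in both are exactly those with both endpoints in $B_b$, i.e.\ the edges of $G[B_b]$. I would then verify the identity for a single edge $e$ and sum. If $e$ has an endpoint in $V_{c_1} \setminus B_b$ it lies only in $G[V_{c_1}]$, and since membership of its endpoints in $S$ agrees with their membership in $S_1$, it is a cut edge on the left exactly when it is a cut edge of the first right-hand term, contributing $0$ to the remaining two terms; the case with an endpoint in $V_{c_2} \setminus B_b$ is symmetric. If both endpoints of $e$ lie in $B_b$, its cut status is governed by $S_b$ alone, so it contributes $1$ to the left-hand side and $1$ to each of $\delta_{G[V_{c_1}]}(S_1, V_{c_1}\setminus S_1)$ and $\delta_{G[V_{c_2}]}(S_2, V_{c_2}\setminus S_2)$ precisely when it is a cut edge of $G[B_b]$, which is exactly when it is counted by the subtracted term $\delta_{G[V_b]}(S_b, B_b \setminus S_b)$. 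The net right-hand contribution is therefore $1+1-1=1$ for a cut bag-edge, $1+0-0=1$ for a one-sided cut edge, and $0$ otherwise, matching the left-hand side in every case; summing over all edges gives the claim.

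The only genuinely delicate point is Fact \emph{(ii)}: the entire decomposition rests on there being no direct edges between the two sides of the join outside the shared bag, which is what makes the cut additive up to a single overlap correction. The subtracted term $\delta_{G[V_b]}(S_b, B_b \setminus S_b)$ merely cancels the double count of bag-internal cut edges created when the two children's cuts are added.
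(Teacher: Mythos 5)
Your proof is correct and takes essentially the same route as the paper's: both rest on the tree-decomposition fact that no edge joins $V_{c_1}\setminus B_b$ to $V_{c_2}\setminus B_b$, so the cut of $S$ splits over the two children and the only double counting is on edges inside $B_b$, which the subtracted term $\delta_{G[V_b]}(S_b, B_b\setminus S_b)$ cancels. If anything, your per-edge inclusion--exclusion is written more carefully than the paper's displayed set-algebra, whose first intermediate equation as literally stated omits cut edges running from $S_b$ to vertices of $(V_{c_1}\cup V_{c_2})\setminus B_b$ that lie outside $S$, an inaccuracy the subsequent regrouping silently absorbs.
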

\begin{proof}

From the properties of tree decomposition, it follows that for any two vertices $u\in V_{c_1}\setminus B_{b}$ and $w\in V_{c_2}\setminus B_b$, $uw\notin E$. 
Further all the edges in $\delta(V_{c_1}\setminus B_{b})$ and $\delta(V_{c_1}\setminus B_{b})$ are incident on the vertices in $B_b$. Thus we have the following equation,
\begin{align}
\delta_{G[V_b]}(S,V_b\setminus S) =& \delta_{G[V_{c_1}]}(S_1\setminus B_b, V_{c_1}\setminus S_1) + \delta_{G[V_{c_2}]}(S_2\setminus B_b, V_{c_2}\setminus S_2)+ \delta_{G[V_b]}(S_b,B_b\setminus S_b) \nonumber\\
=&(\delta_{G[V_{c_1}]}(S_1\setminus B_b, V_{c_1}\setminus S_1) + \delta_{G[V_{b}]}(S_b,B_b\setminus S_b))\nonumber \\ &+ (\delta_{G[V_{c_1}]}(S_2\setminus B_b, V_{c_2}\setminus S_2) + \delta_{G[V_{b}]}(S_b,B_b\setminus S_b)) -  \delta_{G[V_{b}]}(S_b,B_b\setminus S_b) \nonumber\\
=&  \delta_{G[V_{c_1}]}(S_1, V_{c_1}\setminus S_1) + \delta_{G[V_{c_2}]}(S_2, V_{c_2}\setminus S_2) - \delta_{G[V_{b}]}(S_b,B_b\setminus S_b) \nonumber
\end{align}

\end{proof}
We can now compute the dynamic program variable, in this case, as follows. 
\[
M_b(P_b, S_b) = 
\displaystyle \max_{\substack{\text{$(P_{c_1},P_{c_2})$ are valid}\\ \textsf{with respect to $P_b$}}} M_{c_1}(P_{c_1}, S_b) + M_{c_2}(P_{c_2}, S_b) - |\delta_{G[V_b]}(S_b,B_b\setminus S_b)|
\]

\section{Maximum Leaf Spanning Tree}
\label{app:mlst}
In the Maximum Leaf Spanning Tree  problem, we are given an undirected graph $G=(V,E)$ and the goal is to find a spanning tree that has a maximum number of leaves. We now show that this well studied problem is a special case of the \textsf{Connected Submodular Maximization} problem. Consider maximizing the submodular function $f(S) = |\{v\ |\ v \in \mathbb{N}(S) \setminus S\}|$ where $\mathbb{N}(S)$ is the set of vertices that have a neighbor in $S$. Now, it is easy to observe that there is a tree with $L$ leaves if and only if there is a connected set $S$ such that $f(S) = L$. Further, we show that without loss of generality for any solution $S$ to Connected Submodular Maximization problem, we have that $S \cup \mathbb{N}(S) = V$ and hence the corresponding tree is spanning. Suppose $S$ is a feasible solution and $V \neq S \cup \mathbb{N}(S)$, then there must exist an edge $(u,v)$ such that $u \in \mathbb{N}(S)\setminus S$ and $v \notin S \cup \mathbb{N}(S)$. Now $S' = S \cup \{u\}$ is also a feasible solution and $f(S) = f(S')$.

\section{High Connectivity}
\label{app:conn}
We observe that the \textsf{Connected Submodular Maximization} has a constant factor approximation algorithm if the graph has high connectivity. 
Let $S \subseteq V$ be a random set of vertices such that every vertex $v$ is chosen in $S$ independently with probability $\frac{1}{2}$.
As shown by Feige et al.~\cite{feige2011maximizing}, $\mathbb{E}[f(S)] \geq (\frac{1}{4}) f(OPT)$ where the $f(OPT)$ is the maximum value of $f$. Further, as shown by Censor-Hillel et al.~\cite{censor2015tight} if the graph $G$ has $\Omega(\log n)$ vertex connectivity, the set $S$ obtained above is connected with high probability. Hence, $S$ is a $\frac{1}{4}$ approximate solution to the \textsf{Connected Submodular Maximization} problem.

\end{document}